\newcounter{mnotecount}[section]
\renewcommand{\themnotecount}{\thesection.\arabic{mnotecount}}
\newcommand{\mnote}[1]
{\protect{\stepcounter{mnotecount}}$^{\mbox{\footnotesize
$
\bullet$\themnotecount}}$ \marginpar{
\raggedright\tiny\em
$\!\!\!\!\!\!\,\bullet$\themnotecount: #1} }
\definecolor{darkgreen}{rgb}{0,.5,0}
\newtheorem{thm}{Theorem}
\newtheorem{Thm}{Theorem}
\newtheorem{Cor}[thm]{Corollary}
\newtheorem{Prop}[thm]{Proposition}
\theoremstyle{definition}
\newtheorem{Def}[thm]{Definition}
\newtheorem{Rem}[thm]{Remark}
\newtheorem{Ex}[thm]{Example}
\newtheorem{principle}{Principle}
\newcommand{\norm}[1]{\left\Vert {#1} \right\Vert}
\newcommand{\abs}[1]{\left\vert {#1} \right\vert}
\newcommand{\sR}{{\mathbb R}}
\newcommand{\sC}{{\mathbb C}}
\newcommand{\B}{\mathcal{B}}
\renewcommand{\H}{\mathcal{H}}
\newcommand{\M}{\mathcal{M}}
\newcommand{\K}{\mathcal{K}}
\newcommand{\CC}{\mathbb{C}}
\newcommand{\NN}{\mathbb{N}}
\newcommand{\R}{\mathcal{R}}
\newcommand{\X}{\mathcal{X}}
\newcommand{\E}{\mathcal{E}}
\newcommand{\bmu}{\boldsymbol\mu}
\DeclareMathOperator{\diag}{diag}
\DeclareMathOperator{\supp}{supp}
\DeclareMathOperator{\sech}{sech}
\DeclareMathOperator{\argmax}{arg \, max}
\DeclareMathOperator{\ba}{\mathbf{a}}
\newcommand{\dt}{\partial}
\newcommand{\x}{\times}
\newcommand{\vc}{\vcentcolon =}
\newcommand{\cv}{= \vcentcolon}
\DeclareMathOperator{\id}{id}
\newcommand{\<}{\left\langle}		
\renewcommand{\>}{\right\rangle}		
\renewcommand{\R}{\mathbb{R}}
\newcommand{\PP}{\mathscr{P}}
\newcommand{\Bs}{\mathscr{B}}
\newcommand{\wt}{\widetilde}
\begin{document}


\title{Causal evolution of wave packets}


\author{Micha{\l} Eckstein}
\email{michal.eckstein@uj.edu.pl}
\affiliation{Faculty of Physics, Astronomy and Applied Computer Science, Jagiellonian University, ul. prof. Stanis{\l}awa {\L}ojasiewicza 11, 30-348 Krak\'ow, Poland}
\affiliation{Copernicus Center for Interdisciplinary Studies, ul. S{\l}awkowska 17, 31-016 Krak\'ow, Poland}

\author{Tomasz Miller}
\email{T.Miller@mini.pw.edu.pl}
\affiliation{Faculty of Mathematics and Information Science, Warsaw University of Technology, ul. Koszykowa 75, 00-662 Warsaw, Poland.}
\affiliation{Copernicus Center for Interdisciplinary Studies, ul. S{\l}awkowska 17, 31-016 Krak\'ow, Poland}


\date{\today}

\begin{abstract}
Drawing from the optimal transport theory adapted to the relativistic setting we formulate the principle of a causal flow of probability and apply it in the wave packet formalism. We demonstrate that whereas the Dirac system is causal, the relativistic-Schr\"odinger Hamiltonian impels a superluminal evolution of probabilities. We quantify the causality breakdown in the latter system and argue that, in contrast to the popular viewpoint, it is not related to the localisation properties of the states.
\end{abstract}

\pacs{}

\maketitle

\section{\label{sec:intro}Introduction}

Causality, understood as the impossibility of superluminal transfer of information, is considered one of the fundamental principles, which should be satisfied in any physical theory. Whereas it is readily
implemented in classical theories based on Lorentzian geometry, the status of causality in quantum theory was controversial from its dawn. As expressed in the famous Einstein--Podolsky--Rosen paper
\cite{EPR}, the main stumbling block is the inherent nonlocality of quantum states. However, quantum nonlocality on its own cannot be utilised for a superluminal transfer of information, neither can quantum correlations be communicated between spacelike separated regions of spacetime \cite{QIandGR}. In fact, the principle of causality can be invoked to discriminate theories that predict stronger than quantum correlations \cite{InformationCausality}. 

It is usually argued that the proper framework to study causality in quantum theory should be that of quantum field theory (see for instance \cite{Yngvason,EntanglementCreation,FermiSuperconductors}). Moreover, some researchers conclude that causality --- seemingly broken in one-particle relativistic quantum mechanics --- is magically restored at the QFT level \cite{MinimalPacket,PeskinSchroeder,Thaller}. On the other hand, the results of \cite{WSWSG11} suggest that if a relativistic quantum system is acausal before the second quantisation, then this drawback cannot be cured by the introduction of antiparticles.

From the viewpoint of quantum field theory, the wave packet formalism gives a phenomenological rather than fundamental description of Nature. Nevertheless, it serves as a handful approximation commonly used in atomic, condensed matter \cite{RQM_apps,TopInsulators} and particle physics \cite{BernardiniFlavour,Beuthe}. Regardless of the adopted simplifications, its statistical predictions confronted in the experiments cannot be at odds with the principle of causality.

Within the wave packet formalism, one can investigate the status of causality in course of the evolution of the system, driven by a relativistically invariant Hamiltonian \cite{WSWSG11}. This firstly requires a precise definition, which accurately disentangles the nonlocality of quantum states from the causality violation effects as, for instance, interference fringes can travel with superluminal speed, but cannot be utilised to transfer information \cite{Berry2012}. The results usually invoked in this context are these of Hegerfeldt \cite{Hegerfeldt1} (see also \cite{Hegerfeldt1985,HegerfeldtFermi,Hegerfeldt2001,Hegerfeldt2}), which show that an initially localised \footnote{`Localised' in the~context of Hegerfeldt's theorem usually means compactly supported in space, but the~argument extends to states with exponentially bounded tails
\cite{Hegerfeldt1985}.} quantum state with positive energy immediately develops infinite tails. Hegerfeldt's approach, however, faced criticism \cite{Yngvason} based on the impossibility of preparing a `localised' state \cite{ReehSchlieder} (compare \cite{ZitterQFT} though). It is usually concluded that Hegerfeldt's theorems, which are mathematically correct, provide an alternative argument against the localisation of quantum relativistic states \cite{MinimalPacket,Barat2003,Yngvason} rather than a `proof of acausality'.

Whereas from Hegerfeldt's theorem it follows that locality and positive energy of a quantum state necessarily imply superluminal probability flow, the use of a nonlocal initial state does not a priori guarantee a causal evolution.  In fact, to our best knowledge, no rigorous definition of causality in the wave packet formalism has been provided, beyond the case of states with
exponentially bounded tails. Moreover, there seems to be no reason to restrict the studies to positive-energy wave packets only, as for instance in the Dirac-like systems in atomic and condensed matter physics superpositions of positive and negative energy states are routinely involved \cite{ZitterNature,ZitterSemiconductors}. 

The aim of this paper is to study the issue of causality in the wave packet formalism for states with arbitrary localisation properties. To this end we employ the notion of causality for Borel probability measures developed in our recent articles \cite{EcksteinMiller2015,Miller16}. 
Armed with a rigorous notion of causality suitable for the study of arbitrary wave packets, we investigate the status of causality during the evolution of two relativistic quantum systems, driven respectively by the Dirac and relativistic-Schr\"odinger Hamiltonians. We demonstrate that in the Dirac system, the evolution of \emph{any} initial wave packet is causal, even in the presence of interactions. On the other hand, the propagation under the relativistic-Schr\"odinger Hamiltonian turns out to be at odds with the principle of causality. 
We confirm and clarify the conclusions of Hegerfeldt concerning the acausal behaviour of exponentially localised states with positive energy. In addition, we provide explicit examples of quantum states with heavy tails, that do not fulfil Hegerfeldt's localisation assumption, but do break the principle of causality. We quantify the acausal effects and confirm their transient character, detected in \cite{WSWSG11} for compactly supported initial states. We therefore conclude that in the relativistic-Schr\"odinger system Einstein's causality is indeed violated, but the latter is a feature of the Hamiltonian and not of any particular state.

The paper is organised as follows: In Section \ref{sec:omega} we present the basic definition of causality for probability measures from \cite{EcksteinMiller2015} and the physical intuition behind. Therein, we also coin the definition of a \emph{causal evolution} and discuss its Lorentz invariance. Then, in Section \ref{sec:QM}, we apply the developed theory in the wave packet formalism. After some general considerations concerning the quantification of causality breakdown, we turn to the $n$-dimensional Dirac system and show that it impels a causal evolution of probability measures, regardless of the choice of the initial spinor. This result holds also when, possibly non-Abelian, external gauge field is minimally coupled to the system. Then, we take a closer look at the relativistic-Schr\"odinger system in 2 dimensions. We
confirm the breakdown of causality in the course of evolution of an initial Gaussian state, derived in \cite{Hegerfeldt1985} and checked also in \cite{WSWSG11}. Next, we turn to states with exponentially bounded tails and show, via explicit examples, that Hegerfeldt's bound is superficial. Finally, we demonstrate the violation of causality for wave packets of power-like decay. 
A summary of our work, together with further comparison with Hegerfeldt's theorem, comprises Section \ref{sec:outlook}. Therein, we also make an outlook into the potential empirical implications of our results and their possible refinements.

\section{\label{sec:omega}Causality for probability measures}

\subsection{\label{subsec:general}The causal relation}

We start with a brief summary of the main concepts contained in \cite{EcksteinMiller2015}. This requires some notions from Lorentzian geometry, topology and measure theory, which we invoke without introducing the complete mathematical structure behind. For a detailed exposition on these topics the reader is referred to standard textbooks on general relativity \cite{Beem,Penrose1972,Wald} and optimal transport theory \cite{ambrosio2008gradient,Villani2008} or, simply, to the `Preliminaries' section in \cite{EcksteinMiller2015}.

Let $\M$ be a spacetime. For any $p,q \in \M$ we say that $p$ \emph{causally precedes} $q$ (denoted $p \preceq q$) iff there exists a piecewise smooth causal curve $\gamma: [0,1] \rightarrow \M$, such that $\gamma(0) = p$ and $\gamma(1) = q$. It is customary to denote the set of causally related pairs of events by $J^+$, i.e. $J^+ := \{ (p,q) \in \M^2 \ | \ p \preceq q \}$. For any $p \in \M$ one defines the \emph{causal future (past)} of $p$ via
\begin{align*}
J^+(p) := \{ q \in \M \ | \ p \preceq q \} \qquad \left( J^-(p) := \{ r \in \M \ | \ r \preceq p \} \right).
\end{align*}
Similarly, for any set $\X \subseteq \M$ one denotes $J^\pm(\X) := \bigcup\limits_{p \in \X} J^\pm(p)$. 


Let us now consider $\PP(\M)$ -- the set of all Borel probability measures on $\M$ (which we shall simply call `measures' from now on), i.e. measures defined on the $\sigma$-algebra $\Bs(\M)$ of all Borel subsets of $\M$, and normalised to 1. In particular, $\PP(\M)$ contains all measures of the form $\rho \cdot \lambda_\M$, where $\rho$ is a probability density on $\M$ and $\lambda_\M$ is the standard Lebesgue measure on $\M$. Also, one can regard $\M$ as naturally embedded in $\PP(\M)$, the embedding being the map $p \mapsto \delta_p$, where the latter denotes the Dirac measure concentrated at the event $p$.

In \cite{EcksteinMiller2015} we demonstrated that the causal relation $\preceq$ extends in a natural way from the spacetime $\M$ onto $\PP(\M)$ \cite[Definition 2]{EcksteinMiller2015}. Concretely, we have:

\begin{Def}\cite{EcksteinMiller2015}
\label{causality_def_true}
Let $\M$ be a~spacetime. For any $\mu,\nu \in \PP(\M)$ we say that $\mu$ \emph{causally precedes} $\nu$ (symbolically $\mu \preceq \nu$) iff there exists $\omega \in \PP(\M^2)$ such that
\begin{enumerate}[i)]
\item $\omega(A \times \M) = \mu(A)$ and $\omega(\M \times A) = \nu(A)$ for any $A \in \Bs(\M)$,
\item $\omega(J^+) = 1$.
\end{enumerate}
Such an~$\omega$ is called a~\emph{causal coupling} of $\mu$ and $\nu$.
\end{Def}

The above definition mathematically encodes the following physical intuition: The existence of a joint probability measure $\omega$ provides a (possibly non-unique) probability flow from $\mu$ to $\nu$ and the condition $\omega(J^+) = 1$ says that the flow is conducted exclusively along future-directed causal curves. We shall denote the set of all couplings between $\mu, \nu \in \PP(\M)$ (i.e. joint probability measures satisfying $i)$) by $\Pi(\mu,\nu)$ and the set of causal ones by $\Pi_c(\mu,\nu)$.



In spacetimes equipped with a sufficiently robust causal structure one has the following characterisation of the causal precedence relation:
\begin{Thm}
\label{charcond}
Let $\M$ be a~causally simple spacetime \footnote{Causally simple spacetimes are slightly more general than the globally hyperbolic ones. In particular, they do not contain closed causal curves and admit a global time function, but they do not, in general, admit a Cauchy hypersurface. For a precise definition see \cite{MS08}.} and let $\mu, \nu \in \mathfrak{P}(\M)$. Then, $\mu \preceq \nu$ if and only if for all compact $\K \subseteq \supp \, \mu$
\begin{align}
    \label{charcond2}
    \mu(\K) \leq \nu(J^+(\K)).
\end{align}
\end{Thm}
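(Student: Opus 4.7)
The plan is to prove the two implications separately: the forward direction is a direct calculation from the definition of a causal coupling, while the backward direction is a reduction to a Strassen-type coupling existence theorem applied to the closed causal relation $J^+$. For ($\Rightarrow$), pick any causal coupling $\omega \in \Pi_c(\mu,\nu)$ and any compact $\K \subseteq \supp\mu$. Since $\omega(J^+) = 1$, the first-marginal condition gives $\mu(\K) = \omega(\K \times \M) = \omega((\K \times \M) \cap J^+)$, and the latter set is contained in $\M \times J^+(\K)$, because any pair $(p,q)$ with $p \in \K$ and $p \preceq q$ satisfies $q \in J^+(\K)$. Hence $\mu(\K) \leq \omega(\M \times J^+(\K)) = \nu(J^+(\K))$, which is exactly \eqref{charcond2}.

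For ($\Leftarrow$), the goal is to construct a causal coupling from the hypothesis, and I would invoke a Strassen/Kellerer-type theorem: on a Polish space, a coupling of two probability measures supported on a closed relation $R$ exists if and only if $\mu(F) \leq \nu(R(F))$ for every closed $F$. Three preparatory observations are needed. First, causal simplicity of $\M$ ensures that $J^+ \subseteq \M \times \M$ is closed, so $J^+(\K)$ is closed for every compact $\K$, and more generally $J^+(F)$ is Borel whenever $F$ is closed. Second, the support-restricted hypothesis extends to all compact subsets of $\M$, because for any compact $\K' \subseteq \M$ one has $\mu(\K') = \mu(\K' \cap \supp\mu)$ and $J^+(\K' \cap \supp\mu) \subseteq J^+(\K')$. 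Third, inner regularity (the Radon property) of Borel probability measures on the Polish manifold $\M$ upgrades the inequality from compact to closed sets:
\begin{align*}
\mu(F) \eq \sup_{\K \subseteq F \text{ compact}} \mu(\K) \;\leq\; \sup_{\K \subseteq F \text{ compact}} \nu(J^+(\K)) \;\leq\; \nu(J^+(F)).
\end{align*}
Strassen's theorem then produces a coupling $\omega \in \Pi(\mu,\nu)$ with $\omega(J^+) = 1$, i.e.\ a causal coupling, and thus $\mu \preceq \nu$.

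The principal obstacle is identifying the precise form of Strassen's theorem appropriate to our non-compact spacetime setting and verifying that the topological hypotheses are met. Causal simplicity plays the crucial role here by rendering $J^+$ closed, so that Strassen's selection machinery applies; without this assumption one would need a more delicate argument, e.g.\ exhaustion by globally hyperbolic open subsets or an approximation via sequences of measures with shrinking supports. A secondary technical point is the passage from compact subsets of $\supp\mu$ to all closed subsets of $\M$, handled by the two monotonicity/regularity observations above.
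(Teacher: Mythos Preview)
Your argument is correct, but it follows a different route from the paper's. The paper does not re-invoke Strassen's theorem here; instead it reduces condition \eqref{charcond2} to an earlier characterisation \cite[Theorem 8]{EcksteinMiller2015}, namely that $\mu \preceq \nu$ iff $\mu(J^+(C)) \leq \nu(J^+(C))$ for all compact $C \subseteq \M$. The forward implication then follows trivially (since $\K \subseteq J^+(\K)$), and for the converse the paper uses tightness of $\mu$ to approximate $J^+(C) \cap \supp\mu$ from inside by compacta $\K_\varepsilon$, applies \eqref{charcond2} to $\K_\varepsilon$, and uses $J^+(J^+(C)) = J^+(C)$ to close the chain of inequalities. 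Your approach is more self-contained: you prove $(\Rightarrow)$ directly from the coupling definition rather than via the intermediate condition, and for $(\Leftarrow)$ you upgrade \eqref{charcond2} to all closed sets by inner regularity and then appeal to Strassen/Kellerer directly. In effect you are re-deriving the content of \cite[Theorem 8]{EcksteinMiller2015} on the fly. This buys independence from the earlier paper at the cost of having to name a precise Strassen-type statement; the paper's route is shorter because that machinery has already been packaged. One small point: the version of Strassen you cite (closed test sets $F$) is fine, but note that the same inner-regularity step you use yields the inequality for \emph{open} sets as well, which is the form in which Strassen's theorem is most commonly stated.
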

\begin{proof}
On the strength of \cite[Theorem 8]{EcksteinMiller2015}, $\mu$ causally precedes $\nu$ iff for all compact $C \subseteq \M$
\begin{align}
\label{charcond1}
\mu(J^+(C)) \leq \nu(J^+(C)),
\end{align}
which trivially implies (\ref{charcond2}). In order to show the converse implication, let $C \subseteq \M$ be any compact set. Recall that every measure on $\M$, icluding $\mu$, is \emph{tight}, i.e. the $\mu$-measure of any Borel subset of $\M$ can be approximated from below by $\mu$-measures of its compact subsets. In particular,
\begin{align*}
    \forall \, \varepsilon > 0 \ \exists \, \K_{\varepsilon} \subseteq J^+(C) \cap \, \supp \, \mu \ \textnormal{ compact and such that } \ \mu(J^+(C) \cap \supp \, \mu) \leq \mu(\K_{\varepsilon}) + \varepsilon
\end{align*}
\noindent
Using (\ref{charcond2}), one thus can write that
\begin{align*}
\mu(J^+(C)) & = \mu(J^+(C) \cap \supp \, \mu) \leq \mu(\K_{\varepsilon}) + \varepsilon \leq \nu(J^+(\K_{\varepsilon})) + \varepsilon
\\
& \leq \nu(J^+(J^+(C) \cap \supp \, \mu)) + \varepsilon \leq \nu(J^+(J^+(C))) + \varepsilon = \nu(J^+(C)) + \varepsilon,
\end{align*}
\noindent
which yields $ii)$ as soon as one takes $\varepsilon \rightarrow 0^+$.
\end{proof}

Condition (\ref{charcond2}) provides a link with the `no-signalling' intuition behind the principle of causality. Indeed, imagine that there exists a physical process, which implies a probability flow $\mu \rightsquigarrow \nu$ --- i.e. there exists $\omega \in \Pi(\mu,\nu)$ --- which is superluminal, i.e. $\omega(J^+)<1$. Then, Theorem \ref{charcond} says that there exists a compact region of spacetime $\K$, such that the probability leaks out of its future cone. In this case, an observer localised in $\K$ could encode some information in a probability measure $\mu$, for instance by collapsing a non-local quantum states of a larger system, and transfer it to a recipient beyond $J^+(\K)$ -- the causal future of $\K$. Such a method of signalling would be rather inefficient, due to its statistical nature, but would be a priori possible (compare similar arguments given in \cite{Hegerfeldt2001} or \cite{Hegerfeldt1985}).

If $\M$ is causally simple, then the condition $\omega(J^+)=1$ can be equivalently expressed as $\supp \, \omega \subseteq J^+$ \cite[Remark 5]{EcksteinMiller2015}. This, in particular, implies the following necessary condition for the causal precedence of two measures \cite[Proposition 5]{EcksteinMiller2015}.

\begin{Prop}\cite{EcksteinMiller2015}
\label{neccond}
Let $\M$ be a~causally simple spacetime and let $\mu, \nu \in \mathfrak{P}(\M)$, with $\mu$ compactly supported. If $\mu \preceq \nu$, then $\supp \, \nu \subseteq J^+(\supp \, \mu)$.
\end{Prop}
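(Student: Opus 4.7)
The plan is to work directly from Definition \ref{causality_def_true}: fix a causal coupling $\omega \in \Pi_c(\mu,\nu)$ and show that every point of $\supp\,\nu$ must lie in $J^+(\supp\,\mu)$ by exploiting (i) the support containment $\supp\,\omega \subseteq J^+$ valid in a causally simple spacetime, and (ii) the closedness of $J^+(K)$ for compact $K$, which is precisely the extra structural feature granted by causal simplicity.

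First I would record two elementary facts: from the marginal conditions one has $\supp\,\omega \subseteq \supp\,\mu \times \supp\,\nu$, and from the causal-simplicity remark invoked just before the statement one has $\supp\,\omega \subseteq J^+$. Combined, this gives $\supp\,\omega \subseteq (\supp\,\mu \times \supp\,\nu) \cap J^+$, which is the clean geometric picture we will use.

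Next I would take an arbitrary $q \in \supp\,\nu$ and any open neighbourhood $U$ of $q$. By definition of support, $\nu(U)>0$, so by the marginal property $\omega(\M \times U)>0$, and using $\supp\,\omega \subseteq \supp\,\mu \times \M$ we also get $\omega(\supp\,\mu \times U)>0$. Hence there exists $(p_U,r_U) \in \supp\,\omega \cap (\supp\,\mu \times U)$, and by the causal inclusion above $p_U \preceq r_U$, so $r_U \in J^+(p_U) \subseteq J^+(\supp\,\mu)$. Running this with a countable neighbourhood basis of $q$ produces a sequence $r_n \in J^+(\supp\,\mu)$ with $r_n \to q$.

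Finally I would invoke the standard fact that in a causally simple spacetime $J^+(K)$ is closed for every compact $K$; since $\supp\,\mu$ is compact by assumption, $J^+(\supp\,\mu)$ is closed, and the limit $q$ therefore belongs to it. As $q \in \supp\,\nu$ was arbitrary, this gives $\supp\,\nu \subseteq J^+(\supp\,\mu)$. I do not expect any real obstacle here: the only non-trivial input is the closedness of $J^+(\supp\,\mu)$, which is precisely what causal simplicity is designed to supply, and the argument would fail for a generic spacetime exactly because an accumulation point of $J^+(\supp\,\mu)$ need not remain in it.
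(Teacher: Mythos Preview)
Your argument is correct and follows exactly the route the paper indicates: the text does not spell out a proof here (it cites \cite[Proposition~5]{EcksteinMiller2015}), but remarks just before the statement that in causally simple spacetimes $\omega(J^+)=1$ is equivalent to $\supp\,\omega \subseteq J^+$, and that this ``in particular, implies'' the proposition --- your proof makes precisely that implication explicit, using the closedness of $J^+(\supp\,\mu)$ for compact $\supp\,\mu$. A marginally quicker alternative, also available from the paper, is to apply Theorem~\ref{charcond} with $\K=\supp\,\mu$: then $\nu(J^+(\supp\,\mu))\geq\mu(\supp\,\mu)=1$, and closedness of $J^+(\supp\,\mu)$ forces $\supp\,\nu\subseteq J^+(\supp\,\mu)$.
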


In other words, if the~measure $\mu$ is compactly supported, then the~support of any $\nu$ causally preceded by $\mu$ should lie within the~future of $\supp \, \mu$. Whereas this condition is necessary, it is not sufficient, even in the case of both $\mu$ and $\nu$ compactly supported. This is readily illustrated by the following counterexample:

\begin{figure}[h]
\begin{tikzpicture}
\draw[black, thick] (-2,0) -- (2,0);
\draw[black] (-4,3) -- (4,3);
\draw[black,ultra thick] (-1.5,0) -- (-0.5,0);
\draw[black,ultra thick] (3,3) -- (4,3);
\draw[black, dashed] (-2,0) -- (-5,3);
\draw[black, dashed] (2,0) -- (5,3);
\draw[black, dotted] (-0.5,0) -- (2.5,3);
\draw[black, dotted] (-1.5,0) -- (-4.5,3);
\node[] at (0,-0.5) {$\mu$};
\node[] at (-1,-0.5) {$\K$};
\node[] at (0,3.5) {$\nu$};
\node[] at (3.5,3.5) {$\K'$};
\node[] at (2.2,1.5) {$J^+(\supp \mu)$};
\node[] at (-1,1.5) {$J^+(\K)$};

\end{tikzpicture}
\caption{Although $\supp \, \nu$ lies in the future of $\supp \, \mu$, the excessive weight condensed in the region $\K$ cannot flow causally to $\K'$.}
\end{figure}
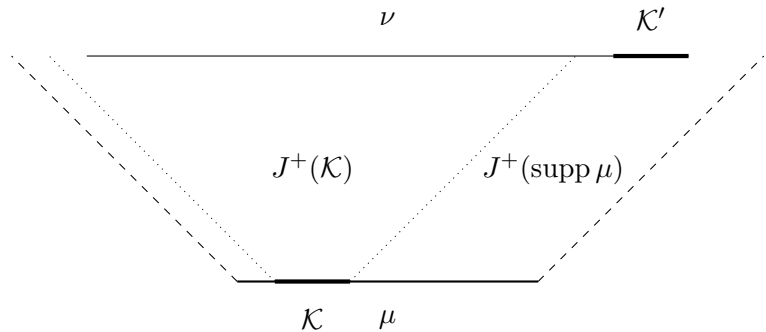

\subsection{\label{sec:evolution}The causal dynamics of measures}

The formalism developed in \cite{EcksteinMiller2015} and summarised above establishes the kinematical structure of $\PP(\M)$. We shall now formalise the requirement that any evolution of probability measures should respect the inherent causal structure. This task has been accomplished in \cite{Miller16} in full generality of curved spacetimes. Since the main objective of this paper is the application in wave packet formalism, we will focus exclusively on the Minkowski spacetime and assume the measures to be localised in time, i.e. concentrated on parallel time-slices.

Let us fix an interval $I \subseteq \sR$ and consider a measure-valued map
\begin{align*}
\E: I \to \PP(\R^n), \quad t \mapsto \E(t) \cv \mu_t,
\end{align*}
which describes a time-dependent probability measure on $\sR^n$. This map can be equivalently regarded as a family of measures $\{\bmu_t\}_{t \in I} \subseteq \PP(\M)$, where $\bmu_t := \delta_t \times \mu_t$ and $\M := \sR^{1+n}$ denotes the $(1+n)$-dimensional Minkowski spacetime. One can think of the map $t \mapsto \mu_t$ as a \emph{curve in $\PP(\sR^n)$} parametrised by $t \in I$. If $\mu_t = \delta_{x(t)}$, then one recovers a curve $t \mapsto x(t)$ in $\sR^n$, whereas $\bmu_t = \delta_{(t,x(t))}$ becomes the corresponding worldline in $\M$ of a classical point particle. We shall refer to the map $\E$, or equivalently to the corresponding family $\{\bmu_t\}$, as the \emph{dynamics of measures} or \emph{evolution of measures}.

The compatibility of the dynamics of measures with the causal structure of $\PP(\M)$ is formalised in the following definition:
\begin{Def}\label{def:causal_evolution}
We say that an evolution of measures is \emph{causal} iff
\begin{align}
\label{causal_evo}
\forall \, s,t \in I  \text{ with } s \leq t \qquad \bmu_s \preceq \bmu_t,
\end{align}
in the sense of Definition \ref{causality_def_true}.
\end{Def}

One may be concerned about the apparent frame-dependence of thus defined (causal) evolution of measures. Indeed, the measures $\bmu_t$ live on $t$-slices, and so this way of describing the dynamics of a non-local phenomenon manifestly depends on the slicing of the spacetime associated with the chosen time parameter. To put it differently, consider two observers $O$ and $O'$, one Lorentz-boosted with respect to the other, who want to describe the dynamics of the same non-local phenomenon. Their evolutions of measures $\E$ and $\E'$, respectively, employ two different time parameters $t$ and $t'$ and, consequently, two different collections of time slices. In particular, it is a priori not clear whether $O$ and $O'$ would always agree on the causality of their respective evolution of measures.

This matter has been thoroughly analysed in \cite[Section 5]{Miller16}, in a much broader class of spacetimes. It turns out that, in spite of the apparent frame-dependence of Definition \ref{def:causal_evolution}, the property of the evolution of measures being causal is independent of the choice of the time parameter. 
Interested reader can find all the details in \cite{Miller16}.

$\PP(\sR^n)$-valued maps can be utilised to model various physical entities evolving according to some dynamics. The most natural examples concern classical spread objects, such as charge or energy densities (see Section \ref{sec:examples}). In the present paper, we demonstrate that the same concept can be successfully applied to probability measures obtained from wave functions in the position representation.

As stressed in the introduction, the wave packet formalism has a phenomenological character from the viewpoint of relativistic quantum theory. Moreover, in actual experiments the measured probabilities are affected by the characteristic of the detector \cite{EberlyWodkiewicz,WodkiewiczPRL}. Therefore, it is more adequate to speak of causality of the \emph{model} rather then the quantum system itself. The latter is believed to be causal par excellence, on the strength of the micro-causality axiom of quantum field theory \cite{Haag,Wightman}.



\begin{Def}\label{def:causal_sys}
We say that the \emph{model of a physical system is causal} iff any evolution of measures on $\R^n$ governed by its dynamics is causal in the sense of Definition \ref{def:causal_evolution}.
\end{Def}


Equipped with the rigorous definition of a causal evolution we can express the demand of causality of the statistical predictions of any physical model.

\begin{principle}\label{post}
Any description of a physical system, which involves an evolution of probability measures on $\sR^n$ must be causal in the sense of Definition \ref{def:causal_sys}.
\end{principle}


\subsection{\label{subsec:con_eq}Continuity equation}




In physics one often encounters the \emph{continuity equation}, which describes the transport (or the \emph{flow}) of a certain conserved quantity, described by a density
function $\rho: [0,T] \times \mathbb{R}^n \rightarrow \mathbb{R}$. Typically, the equation has the form
\begin{align}
\label{conteq}
\frac{\partial}{\partial t} \rho + \nabla_x \cdot \textbf{j} = 0,
\end{align}
\noindent
for (sufficiently regular) $\rho$ and a time-dependent vector field $\textbf{j}: [0,T] \times \mathbb{R}^n \rightarrow \mathbb{R}^n$ called the \emph{flux} of $\rho$. If there is a \emph{velocity field} $\textbf{v}$, according to which the flow runs (as it happens for instance in
fluid mechanics), then $\textbf{j} = \rho \textbf{v}$.

The aim of this section is to show that any theory, in which the distribution of a physical quantity evolves in accordance with a continuity equation with a subluminal velocity field, is causal in the sense of Definition \ref{def:causal_sys}.

We begin with the definition of the continuity equation in the space of measures, as given e.g. in \cite[Definition 1.4.1]{Crippa2007PhD}.

\begin{Def}[\cite{Crippa2007PhD}]
Let $I = [0,T]$, for some $T>0$. We say that an evolution of measures $\E: t \mapsto \mu_t$ satisfies the continuity equation with a given time-dependent Borel velocity field
$\textnormal{\textbf{v}}: [0,T] \times \mathbb{R}^n \rightarrow \mathbb{R}^n$, $(t,x) \mapsto \textnormal{\textbf{v}}_t(x)$ iff
\begin{align}
\label{conteq0}
\frac{\partial}{\partial t} \mu_t + \nabla_x \cdot \left( \textnormal{\textbf{v}}_t \mu_t \right) = 0
\end{align}
\noindent
holds in the distributional sense, i.e. for all $\Phi \in C^\infty_c((0,T) \times \mathbb{R}^n)$,
\begin{align}
\label{conteq1}
& \int\limits_0^T \int\limits_{\mathbb{R}^n} \left[\frac{\partial \Phi}{\partial t} + \textnormal{\textbf{v}}_t \cdot \nabla_x \Phi \right] d\mu_t dt = 0 \, .
\end{align}
\end{Def}
The continuity equation allows one to regard the time-dependent measure $\mu_t$ as some sort of a fluid. Its density flows, but overall constitutes a conserved quantity.  
Its `particles' (fluid parcels) move according to the velocity field $\textbf{v}$ in a continuous manner. One would intuitively expect that if the flow of measures is to behave reasonably, the magnitude of $\textbf{v}$ should be bounded. This expectation is attested by the following following theorem \cite[Theorem 3]{Bernard12} (see also \cite[Theorem 3.2]{ambrosio2008continuity} or \cite[Theorem 6.2.2]{Crippa2007PhD} for other formulations). 

\begin{Thm}[\cite{Bernard12}]\label{superposition}
Let $T>0$ and denote $\Gamma_T := C([0,T],\mathbb{R}^n)$. Let $\E$ satisfy the continuity equation with velocity field $\textnormal{\textbf{v}}$ such that
\begin{align}
\label{assumpt}
\exists \, V>0 \ \, \forall \, (t,x) \in [0,T] \times \mathbb{R}^{n} \quad \norm{\textnormal{\textbf{v}}_t(x)} \leq V.
\end{align}
Then, there exists a measure $\sigma \in \PP\left( \Gamma_T \right)$ such that:
\begin{itemize}
\item $\sigma$ is concentrated on absolutely continuous curves $\gamma \in \Gamma_T$ satisfying
\begin{align}
\label{ODE}
\dot{\gamma}(t) = \textnormal{\textbf{v}}_t(\gamma(t)) \qquad \textrm{for } t \in (0,T) \textrm{  a.e.};
\end{align}
\item $\left(\textnormal{ev}_t\right)_\ast \sigma = \mu_t$ for every $t \in [0,T]$, where $\textnormal{ev}_t: \Gamma_T \rightarrow \mathbb{R}^n$ denotes the evaluation map $\textnormal{ev}_t(\gamma) = \gamma(t)$.
\end{itemize}
\end{Thm}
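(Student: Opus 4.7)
The plan is to obtain $\sigma$ by regularizing the continuity equation, solving the resulting smooth ODE classically, and extracting a subsequential limit on path space. First I would fix a standard mollifier $\eta_\varepsilon$ on $\mathbb{R}^n$ and convolve both the measure and its flux: $\mu^\varepsilon_t := \mu_t * \eta_\varepsilon$ and $\mathbf{j}^\varepsilon_t := (\mathbf{v}_t \mu_t) * \eta_\varepsilon$. Because convolution commutes with the derivatives in (\ref{conteq0}), the pair $(\mu^\varepsilon_t, \mathbf{j}^\varepsilon_t)$ satisfies the continuity equation in the classical sense, and the regularised velocity $\mathbf{v}^\varepsilon_t := \mathbf{j}^\varepsilon_t / \mu^\varepsilon_t$ is smooth wherever $\mu^\varepsilon_t > 0$. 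Writing $\mathbf{v}^\varepsilon_t(x)$ as a $\mu_t$-weighted average of values of $\mathbf{v}_t$ and applying Jensen's inequality to $\norm{\cdot}$ yields the key uniform bound $\norm{\mathbf{v}^\varepsilon_t(x)} \leq V$, inherited from (\ref{assumpt}).

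After a brief further regularization in time, the Cauchy--Lipschitz theorem produces a smooth flow $X^\varepsilon_t$ solving $\dot X^\varepsilon_t(x) = \mathbf{v}^\varepsilon_t(X^\varepsilon_t(x))$ with $X^\varepsilon_0(x) = x$. I would then lift to path space through $\Phi^\varepsilon : \mathbb{R}^n \to \Gamma_T$, $\Phi^\varepsilon(x)(t) := X^\varepsilon_t(x)$, and define $\sigma^\varepsilon := (\Phi^\varepsilon)_\ast \mu^\varepsilon_0 \in \PP(\Gamma_T)$. Uniqueness of smooth characteristics forces $(\mathrm{ev}_t)_\ast \sigma^\varepsilon = \mu^\varepsilon_t$ for every $t \in [0,T]$.

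For the passage to the limit I would use tightness. Each $\sigma^\varepsilon$ is concentrated on curves with Lipschitz constant at most $V$, while the initial marginals $\mu^\varepsilon_0 \rightharpoonup \mu_0$ form a tight family. Arzel\`a--Ascoli on equi-Lipschitz curves thus yields tightness of $\{\sigma^\varepsilon\}$ in $\PP(\Gamma_T)$, and Prokhorov provides a narrowly convergent subsequence $\sigma^{\varepsilon_k} \rightharpoonup \sigma$. Continuity of the evaluation maps, combined with $\mu^{\varepsilon_k}_t \rightharpoonup \mu_t$, gives $(\mathrm{ev}_t)_\ast \sigma = \mu_t$ for every $t$.

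The main obstacle, and the technical heart of Bernard's theorem, is to show that $\sigma$-almost every $\gamma$ solves the \emph{original} ODE (\ref{ODE}), not merely its regularised counterpart. The standard device is to introduce, for each continuous approximation $\varphi$ of $\mathbf{v}$, the lower-semicontinuous functional $F_\varphi(\gamma) := \int_0^T \norm{\gamma(t) - \gamma(0) - \int_0^t \varphi_s(\gamma(s))\, ds}\, dt$, to bound $\int F_\varphi\, d\sigma^\varepsilon$ by an error controlled via the uniform estimate $\norm{\mathbf{v}^\varepsilon} \leq V$ and the $L^1$-quantity $\int_0^T \int \norm{\mathbf{v}^\varepsilon_t - \varphi_t}\, d\mu^\varepsilon_t\, dt$, and finally to send first $\varepsilon \to 0$ and then $\varphi \to \mathbf{v}$ in $L^1(\mu_t \otimes dt)$. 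Lower semicontinuity preserves the bound under the narrow limit, and a diagonal argument yields $\dot\gamma(t) = \mathbf{v}_t(\gamma(t))$ for a.e.\ $t$ and $\sigma$-a.e.\ $\gamma$, completing the proof.
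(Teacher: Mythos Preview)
The paper does not supply a proof of this theorem at all: it is quoted verbatim as \cite[Theorem~3]{Bernard12} (with parallel references to \cite{ambrosio2008continuity} and \cite{Crippa2007PhD}) and then immediately \emph{applied} in the proof of Theorem~\ref{bridge}. There is therefore nothing in the paper to compare your argument against.

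That said, your sketch is the standard route to the superposition principle and is essentially the one found in the cited sources: mollify $(\mu_t,\mathbf{v}_t\mu_t)$ in space, observe that the regularised velocity inherits the uniform bound $V$ by Jensen, solve the smooth ODE to obtain $\sigma^\varepsilon$ on path space, use the $V$-Lipschitz bound together with tightness of the initial marginals and Arzel\`a--Ascoli to get tightness of $\{\sigma^\varepsilon\}$, and pass to a narrow limit. The delicate step you isolate --- showing that the limiting $\sigma$ is concentrated on integral curves of the \emph{original} field $\mathbf{v}$ --- is handled exactly as you indicate, via lower-semicontinuous functionals of the form $F_\varphi$ and an $L^1$-approximation of $\mathbf{v}$ by continuous $\varphi$. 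One small caveat: the additional ``brief further regularization in time'' you invoke to get a Lipschitz flow deserves a line of justification, since one must check that the time-mollified pair still solves a continuity equation with a velocity bounded by $V$ (or a constant arbitrarily close to it); this is routine but should not be swept under the rug.
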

One can say that the measure $\sigma$ prescribes a family of curves along which the infinitesimal `parcels' flow during the evolution. Since we put very little requirements on $\textbf{v}$ (namely, that it is Borel and bounded), curves satisfying (\ref{ODE}) might cross each other and the measure $\sigma$ itself is in general not unique.

One would intuitively expect that the probability flow is causal if the norm of the velocity field governing its dynamics is bounded by the the speed of light $c$ at every point of $\M$. The following theorem shows that this is indeed the case.

\begin{Thm}
\label{bridge} 
Let $T>0$ and let the evolution of measures $\E$ satisfy the continuity equation with a velocity field $\textnormal{\textbf{v}}$ such that
\begin{align}
\label{assumpt1}
\forall \, (t,x) \in [0,T] \times \mathbb{R}^{n} \quad \norm{\textnormal{\textbf{v}}_t(x)} \leq c.
\end{align}
Then, $\E$ is causal in the sense of Definition \ref{def:causal_evolution}.
\end{Thm}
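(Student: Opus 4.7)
The proof plan is to exploit the superposition principle (Theorem \ref{superposition}) in order to build, for every pair $s \leq t$, an explicit causal coupling between $\bmu_s$ and $\bmu_t$ obtained as the pushforward of $\sigma$ under a curve-endpoint map. The velocity bound (\ref{assumpt1}) is the analogue of (\ref{assumpt}) with $V=c$, so Theorem \ref{superposition} yields a measure $\sigma \in \PP(\Gamma_T)$ concentrated on absolutely continuous curves $\gamma$ satisfying $\dot\gamma(t) = \vb_t(\gamma(t))$ for a.e.\ $t$, with $(\textnormal{ev}_t)_\ast \sigma = \mu_t$.

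Fix $s,t \in [0,T]$ with $s \leq t$. Consider the continuous map $\Psi_{s,t}: \Gamma_T \to \M^2$ defined by $\Psi_{s,t}(\gamma) := \bigl((s,\gamma(s)),(t,\gamma(t))\bigr)$, and set $\omega_{s,t} := (\Psi_{s,t})_\ast \sigma \in \PP(\M^2)$. I would first verify that $\omega_{s,t} \in \Pi(\bmu_s,\bmu_t)$: for any Borel $A \subseteq \M$ one has $\omega_{s,t}(A \times \M) = \sigma(\{\gamma : (s,\gamma(s)) \in A\})$, which, since $\gamma(s)$ has distribution $\mu_s$ under $\sigma$, equals $(\delta_s \times \mu_s)(A) = \bmu_s(A)$; symmetrically for the second marginal. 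This handles condition i) of Definition \ref{causality_def_true}.

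The remaining point is $\omega_{s,t}(J^+) = 1$, i.e.\ condition ii). For this it suffices to show that $\sigma$-a.e.\ curve $\gamma$ satisfies $\Psi_{s,t}(\gamma) \in J^+$. By the first bullet of Theorem \ref{superposition}, $\sigma$-a.e.\ $\gamma$ is absolutely continuous and obeys $\dot\gamma(\tau) = \vb_\tau(\gamma(\tau))$ almost everywhere in $\tau$. Combined with (\ref{assumpt1}), this gives
\begin{align*}
\norm{\gamma(t) - \gamma(s)} = \norm{\int_s^t \dot\gamma(\tau)\, d\tau} \leq \int_s^t \norm{\vb_\tau(\gamma(\tau))}\, d\tau \leq c(t-s),
\end{align*}
so the spacetime displacement $(t,\gamma(t)) - (s,\gamma(s))$ is future-directed causal in Minkowski $\M$, i.e.\ $(s,\gamma(s)) \preceq (t,\gamma(t))$. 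Hence $\Psi_{s,t}(\gamma) \in J^+$ for $\sigma$-a.e.\ $\gamma$, which yields $\omega_{s,t}(J^+) = 1$. Therefore $\omega_{s,t}$ is a causal coupling of $\bmu_s$ and $\bmu_t$, so $\bmu_s \preceq \bmu_t$, and the arbitrariness of $s \leq t$ gives the causality of the evolution in the sense of Definition \ref{def:causal_evolution}.

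I do not expect a serious obstacle; the only mildly delicate point is the measurability/pushforward bookkeeping needed to identify $\bmu_t = \delta_t \times \mu_t$ as the marginal of $\omega_{s,t}$, but this follows at once from $(\textnormal{ev}_t)_\ast\sigma = \mu_t$. The speed-of-light bound (\ref{assumpt1}) is used exactly once, to upgrade ``absolutely continuous curve solving the ODE'' to ``future-directed causal worldline in $\M$'', which is the geometric heart of the argument.
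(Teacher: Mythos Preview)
Your proof is correct and follows essentially the same route as the paper: invoke the superposition principle to obtain $\sigma$, push it forward via the two-point evaluation map to build the coupling, and use absolute continuity together with the bound (\ref{assumpt1}) to verify $\norm{\gamma(t)-\gamma(s)} \leq c(t-s)$ and hence $\omega(J^+)=1$. The only cosmetic differences are notation and the order in which the marginal check and the causality check are presented.
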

\begin{proof}
By (\ref{assumpt1}), there exists a measure $\sigma \in \PP\left( \Gamma_T \right)$ with the properties listed in Theorem \ref{superposition}.

We claim the following: For every absolutely continuous curve $\gamma \in \Gamma_T$ satisfying (\ref{ODE}), we have
\begin{align}
\label{bridge1}
(s,\gamma(s)) \preceq (t, \gamma(t)), \quad 0 \leq s \leq t \leq T.
\end{align}

Note that the curve $t \mapsto (t, \gamma(t))$, being absolutely continuous, has tangent vectors $(1, \gamma^\prime(t))$ for almost all $t \in (0,T)$. Moreover, these tangent vectors are causal by
(\ref{assumpt1}). However, this curve need not be piecewise smooth, so (\ref{bridge1}) does not follow (that) trivially.

On the other hand, in the Minkowski spacetime (\ref{bridge1}) is equivalent to the inequality
\begin{align}
\label{bridge2}
\norm{\gamma(t) - \gamma(s)} \leq c(t-s), \quad 0 \leq s \leq t \leq T
\end{align}
\noindent
and this can be easily proven by means of the fundamental theorem of calculus, which is valid precisely for absolutely continuous functions. Namely, we can write
\begin{align*}
\forall \, s,t \in [0,T] \quad \gamma(t) = \gamma(s) + \int\limits_s^t \gamma^\prime(\tau) d\tau.
\end{align*}
\noindent
Therefore, if $s \leq t$, then
\begin{align*}
\norm{\gamma(t) - \gamma(s)} = \norm{ \int\limits_s^t \gamma^\prime(\tau) d\tau } \leq \int\limits_s^t \norm{ \gamma^\prime(\tau) } d\tau = \int\limits_s^t \norm{ \textbf{v}_\tau(\gamma(\tau)) } d\tau
\leq c(t-s),
\end{align*}
\noindent
where in the last inequality we employed (\ref{assumpt1}), thus proving (\ref{bridge2}) and, consequently, (\ref{bridge1}).

Now, for any $s,t \in [0,T]$, $s \leq t$ define the map $\textnormal{Ev}_{(s,t)}: \Gamma_T \rightarrow \M^2$ by $\textnormal{Ev}_{(s,t)}(\gamma) := \left( (s,\gamma(s)), (t,\gamma(t)) \right)$. We claim that $\omega := \left(\textnormal{Ev}_{(s,t)}\right)_\ast \sigma$ is a causal coupling of $\bmu_s$ and $\bmu_t$.

Indeed, for any $A \in \Bs(\M)$, using its characteristic function $\chi_A$, one can write
\begin{align*}
\omega(A \times \M) = \int\limits_{\M^2} \chi_A(p) d\omega(p,q) = \int\limits_{\Gamma_T} \chi_A(s,\gamma(s)) d\sigma(\gamma) = \int\limits_{\sR^n} \chi_A(s,y) d\mu_s(y) = \bmu_s(A).
\end{align*}
\noindent
One similarly shows that $\omega(\M \times A) = \bmu_t(A)$.

To demonstrate $\omega(J^+) = 1$, notice that we have
\begin{align*}
\omega(J^+) = \int\limits_{\M^2} \chi_{J^+} d\omega = \int\limits_{\Gamma_T} \underbrace{\chi_{J^+}\left( (s,\gamma(s)), (t,\gamma(t)) \right)}_{= \, 1} d\sigma(\gamma) = \int\limits_{\Gamma_T} d\sigma = 1,
\end{align*}
where we made use of \eqref{bridge1}. This concludes the proof of $\omega$ being a causal coupling and, by the arbitrariness of $s,t$, we have thus shown that the evolution $\E: t \mapsto \mu_t$ is causal.


\end{proof}

As a corollary of Theorem \ref{bridge}, we unravel the following relation between the continuity equation for probability densities (\ref{conteq}) and the causality of their flow.
\begin{Cor}
\label{contcor}
Let $T>0$ and let $\rho,\, \textnormal{\textbf{j}}$ satisfy equation
\eqref{conteq}. Suppose, additionally, that $\rho \geq 0$ 
and that $\int\limits_{\mathbb{R}^n} \rho(0,x) dx =: Q \in (0,+\infty)$. 
Then, if $J := (c \rho, \textnormal{\textbf{j}})$ is a causal vector field on the Minkowski spacetime $\M := \mathbb{R}^{1+n}$, then
the evolution $\E : t \mapsto \mu_t$ with $d\mu_t(x) \vc \frac{\rho(t,x)}{Q}d^nx$ is causal.
\end{Cor}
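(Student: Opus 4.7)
The strategy is to reduce the corollary to Theorem \ref{bridge}. To do so, I need to verify three things: (a) that $t\mapsto \mu_t$ is a curve in $\PP(\R^n)$, (b) that there is a Borel velocity field $\mathbf{v}$ with $\|\mathbf{v}_t(x)\| \leq c$ everywhere, and (c) that the pair $(\mu_t, \mathbf{v})$ satisfies the distributional continuity equation (\ref{conteq1}). Once these are in place, Theorem \ref{bridge} directly yields that $\E$ is causal in the sense of Definition \ref{def:causal_evolution}.

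The natural construction is as follows. For (a), the continuity equation (\ref{conteq}) together with whatever mild decay at infinity is tacitly needed for $Q$ to be finite gives $\int_{\R^n}\rho(t,x)\,d^n x = Q$ for every $t \in [0,T]$, so each $\mu_t$ is indeed a probability measure on $\R^n$. Next, define
\begin{align*}
\mathbf{v}_t(x) := \begin{cases} \mathbf{j}(t,x)/\rho(t,x), & \rho(t,x) > 0, \\ 0, & \rho(t,x) = 0, \end{cases}
\end{align*}
which is Borel. The causality of $J = (c\rho, \mathbf{j})$ amounts to $c^2\rho^2 \geq \|\mathbf{j}\|^2$ pointwise; on $\{\rho > 0\}$ this gives $\|\mathbf{v}_t(x)\| \leq c$, while on $\{\rho = 0\}$ it forces $\mathbf{j} = 0$ and hence $\mathbf{v} = 0$. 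Thus $\|\mathbf{v}\| \leq c$ globally, which settles (b). Crucially, the same observation shows that the pointwise identity $\mathbf{j} = \rho\mathbf{v}$ holds on \emph{all} of $[0,T] \times \R^n$. Substituting this together with $d\mu_t = Q^{-1}\rho\, d^n x$ into the left-hand side of (\ref{conteq1}) yields $Q^{-1}\int_0^T \int_{\R^n}[\rho\, \partial_t\Phi + \mathbf{j}\cdot \nabla_x \Phi]\, d^n x\, dt$ for any $\Phi \in C_c^\infty((0,T)\times \R^n)$, and this vanishes after a standard integration by parts against (\ref{conteq}), the boundary terms being killed by the compact support of $\Phi$. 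This establishes (c).

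The one genuinely delicate point is (b): the naive velocity $\mathbf{j}/\rho$ could blow up on the zero set of $\rho$, and the causality of $J$ is precisely the hypothesis that saves the day by forcing $\mathbf{j}$ to vanish where $\rho$ does, thereby both making the extension by zero globally bounded and preserving the identity $\mathbf{j} = \rho\mathbf{v}$ needed in (c). Everything else is a routine unwinding of definitions, one distributional integration by parts, and an appeal to Theorem \ref{bridge}.
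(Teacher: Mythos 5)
Your proposal is correct and follows essentially the same route as the paper: define $\mathbf{v}=\mathbf{j}/\rho$ on $\{\rho>0\}$ and $0$ elsewhere, check the distributional continuity equation by integration by parts, verify $\norm{\mathbf{v}}\leq c$ from the causality of $J$, and invoke Theorem \ref{bridge}. Your explicit remark that causality of $J$ forces $\mathbf{j}=0$ on the zero set of $\rho$ (so that $\mathbf{j}=\rho\mathbf{v}$ holds globally) is a small point the paper's computation uses tacitly, and it is good that you spelled it out.
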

\begin{proof}
Note that \eqref{conteq} guarantees that $\int\limits_{\mathbb{R}^n} \rho(t,x) dx = Q$ for any $t \in [0,T]$ and the definition of $\mu_t$ is sound.

Now, observe that $\E$ satisfies the continuity equation \eqref{conteq0} with the velocity field $\textbf{v} = (v^k)_{k=1,\ldots,n}$ defined as
\begin{align*}
\forall \, (t,x) \in [0,T] \times \mathbb{R}^{n} \quad v_t^k(x) := \left\{ \begin{array}{ll} \frac{j^k(t,x)}{\rho(t,x)}, & \textrm{for } (t,x) \textrm{ such that } \rho(t,x) \neq 0 \\ 0, & \textrm{for }
(t,x) \textrm{ such that } \rho(t,x) = 0 \end{array} \right. .
\end{align*}
Indeed, for any $\Phi \in C^\infty_c((0,T) \times \mathbb{R}^n)$ one has (we employ Einstein's summation convention),
\begin{multline*}
\int\limits_0^T \int\limits_{\mathbb{R}^n} \left[\frac{\partial \Phi}{\partial t} + \textbf{v}_t \cdot \nabla_x \Phi \right] d\mu_t dt = \frac{1}{Q} \int\limits_0^T \int\limits_{\mathbb{R}^n}
\frac{\partial \Phi}{\partial t} \rho \, d^nx dt + \frac{1}{Q} \int\limits_0^T \int\limits_{\mathbb{R}^n} \rho \, v_t^k \frac{\partial \Phi}{\partial x^k} \, d^nx dt
\\
= - \frac{1}{Q} \int\limits_0^T \int\limits_{\mathbb{R}^n} \Phi \frac{\partial \rho}{\partial t} \, d^nx dt - \frac{1}{Q} \int\limits_0^T \int\limits_{\mathbb{R}^n} \Phi \frac{\partial j^k}{\partial
x^k} d^nx dt = - \frac{1}{Q} \int\limits_0^T \int\limits_{\mathbb{R}^n} \Phi \underbrace{\left[ \frac{\partial \rho}{\partial t} + \frac{\partial j^k}{\partial x^k} \right]}_{= \, 0 \textnormal{ by }
(\ref{conteq})} d^nx dt = 0
\end{multline*}
\noindent
and so condition \eqref{conteq1} is satisfied.

In remains now to check that condition \eqref{assumpt1} holds, which amounts to proving that for all $(t,x) \in [0,T] \times \mathbb{R}^{n}$,
\begin{align*}
\norm{ \textbf{j}(t,x) } \leq c \abs{ \rho(t,x) }.
\end{align*}
But the latter is precisely the condition for the vector field $J := (c \rho, \textnormal{\textbf{j}})$ to be causal, which is true by assumption.
\end{proof}

\subsection{\label{sec:examples}Examples from classical physics}

Corollary \ref{contcor} shows that Definition \ref{def:causal_evolution} correctly encodes the common intuitions concerning the causal flow, at least in the domain of classical physics. Before we move to the quantum realm, let us provide further evidence in favour of Principle \ref{post} by invoking concrete examples.



\begin{Ex}
\label{Ex2}
By Maxwell's equations, if $\rho$ and $\textbf{j}$ denote, respectively, the \emph{charge density} and the \emph{current density} (on $\mathbb{R}^3$), then they satisfy the continuity equation \eqref{conteq}. It is
well known that $J := (c \rho, \textbf{j})$ is a causal four-vector field \cite[\S 28]{Landau}.

Suppose that $\rho \geq 0$ or $\rho \leq 0$ and that the total charge $Q$ is finite. Then, Corollary \ref{contcor} assures that the evolution of $\rho$ is causal.
\end{Ex}

\begin{Ex}
\label{Ex3}
Consider a time- and space-dependent electromagnetic field $\textbf{E}$, $\textbf{B}$. In the absence of external charges and currents, the electromagnetic energy density $u := \frac{1}{2}\left( \varepsilon_0
\norm{\textbf{E}}^2 + \frac{1}{\mu_0} \norm{\textbf{B}}^2 \right)$ satisfies the continuity equation
\begin{align*}
\frac{\partial}{\partial t} u + \nabla_x \cdot \textbf{S} = 0,
\end{align*}
\noindent
where $\textbf{S} := \frac{1}{\mu_0} \textbf{E} \times \textbf{B}$ is the Poynting vector.

As is well known, the quadruple $(c u, \textbf{S})$ is a causal four-vector field, which is actually equal to $c T^{\mu 0}$, where $T^{\mu \nu}$ constitutes the stress--energy tensor of the electromagnetic field \cite[\S\S 32--33]{Landau}. If we now assume that the total energy $\int_{\mathbb{R}^3} u(0,x) dx$ is finite, Corollary \ref{contcor} guarantees that $u$ evolves causally.
\end{Ex}

\begin{Ex}\label{Ex4}
Generalising the previous example, consider a stress--energy tensor $T^{\mu\nu}$ satisfying the \emph{dominant energy condition} (DEC) \cite{malament}:
\begin{align*}
& X^\mu \textnormal{ is a causal vector field }
\\ & \quad \Rightarrow \quad T^{\mu\nu} X_\mu X_\nu \geq 0 \quad \wedge \quad T^{\mu\nu} X_\nu \textnormal{ is a causal vector field.}
\end{align*}
Then, $T^{00} \geq 0$ and the vector field $T^{\mu 0}$ is causal, as is clear by taking $X := (1,0,\ldots,0)$.

The energy conservation principle takes the form (in the Minkowski spacetime) of the continuity equation $\partial_\mu T^{\mu 0} = 0$. All that, together with Corollary \ref{contcor}, implies that
the energy density $\rho := T^{00}$ evolves causally, provided that the total energy $\int_{\mathbb{R}^n} \rho(0,x) dx$ is finite.
\end{Ex}

\section{\label{sec:QM}The wave packet formalism}

We have illustrated the techniques from the optimal transport theory on classical examples. Now we will argue that the same concept proves useful in the quantum theory described via the wave packet formalism. The first hint in favour of this claim is provided by Example \ref{Ex3}: It was observed by Bia{\l}ynicki-Birula \cite{Birula94,Birula96,Birula96_2} that the energy density of the electromagnetic field admits a probabilistic interpretation and can be written as the modulus square of the photon wave function. Example \ref{Ex3}, on the strength of Corollary \ref{contcor}, immediately implies that the description of the one-particle quantum electromagnetism via photon wave function impels a causal probability flow and thus harmonises with Principle \ref{post}. Let us stress that this result, although clearly based on the Lorentz invariance of Maxwell equations, is not trivial. The wave function, being a complex object, induce interference effects in the probability density, which could in principle spoil the causal flow of probability. The fact that this is not the case shows that Definition \ref{def:causal_evolution} correctly disentangles causality violation from the quantum superposition effects.

Since the concept of a photon wave function is in close analogy with the Dirac formalism, it is natural to expect that the latter also enjoys Principle \ref{post}. This is indeed the case, as we will shortly show (see Section \ref{subsec:Dirac}). Before doing so, let us establish the general framework for the study of causality in wave packet formalism on the $(1+n)$-dimensional Minkowski spacetime.

We assume that the quantum system at hand is described by the wave function $\psi : \R^{1+n} \to \CC^k$ for some $k \in \NN$, evolving under the Schr\"odinger equation
\begin{align*}
i \dt_t \psi(t,x) = \hat{H} \psi(t,x),
\end{align*}
where $\hat{H}$ is the Hamiltonian operator. We shall adopt the natural units $\hbar = c =1$.

As the wave function $\psi$ is normalised to 1 at any instant of time, it defines a probability density $\norm{\psi(t,x)}^2$ on $\R^n$ for every $t \in \R$. By fixing a time interval $[0,T]$ we obtain an evolution of measures $\E: t \mapsto \mu_t$, with $d\mu_t(x) = \norm{\psi(t,x)}^2 d^n x \in \PP(\R^n)$. Equipped with Definition \ref{def:causal_evolution} we can thus rigorously study the issue of causality during the evolution of a given quantum system.

Let us note that the evolution of measures $\mu_t$ is \emph{not} uniquely determined by the initial measure $\mu_0$, as initial wave functions differing by a (non-constant) phase factor will yield the same initial probability distribution $\mu_0$, but different evolutions.


\subsection{\label{subsec:quantify}Quantifying the breakdown of causality}

As pointed out in \cite{WSWSG11}, it is desirable to have a quantitative picture of causality breakdown in a given system. In fact, Hegerfeldt's result is only qualitative (see Section \ref{subsec:Hegerfeldt}). 
It might thus happen, that in a given quantum system, the acausal probability flow is in fact irrelevant, as, for instance, the space-scale of causality violation lies well below or well above the scale of validity of the wave packet formalism. Moreover, the results of \cite{WSWSG11} show that the causality breakdown in the relativistic-Schr\"odinger system is a transient effect and it becomes marginal rather quickly.

To quantify the scale of causality breakdown, the notion of the `outside probability' was introduced in \cite{WSWSG11}. In our notations, it can be written as
\begin{align}
N(t,\mu_0) & = \bmu_t\left( \M \setminus J^+(\supp \bmu_0) \right) = 1 - \bmu_t\left(J^+(\supp \bmu_0) \right) \notag \\
& = (\bmu_0 - \bmu_t)\left( J^+(\supp \bmu_0) \right). \label{N_PRA}
\end{align}
Clearly, this quantity makes sense only for strictly localised initial states, as if $\supp \mu_0 = \mathbb{R}^n$ and thus $\supp \bmu_0 = \{0\} \times \mathbb{R}^n$, then $N(t,\mu_0) = 0$ for all $t \geq 0$. Also, one should write $N(t,\psi_0)$, with $\mu_0 = \abs{\psi_0}^2$, rather than $N(t,\mu_0)$ to take into account for mean momentum of the initial packet, which does influence its evolution.

In our formalism, the most natural quantification of causality violation is the following
\begin{align}\label{Q1}
\widetilde{N}(t,\psi_0) \vc \inf \{ \omega(\M^2 \setminus J^+) \, \vert \, \omega \in \Pi(\bmu_0, \bmu_t)\} = 1 - \sup \{ \omega(J^+) \, \vert \, \omega \in \Pi(\bmu_0, \bmu_t)\}.
\end{align}
With Definitions \ref{causality_def_true} and \ref{def:causal_evolution} we have $\widetilde{N}(t,\psi_0) = 0$ if and only if $\bmu_0 \preceq \bmu_t$.

However, equation \eqref{Q1} is not very convenient for concrete computations as one needs to explore the whole space $\Pi(\bmu_0, \bmu_t)$, which is vast. Also, its relationship with the actual possibility of superluminal information transfer is not visible.

Drawing from Theorem \ref{charcond} we can define another measure of causality violation, which mimics, to some extent, the quantity \eqref{N_PRA} defined in \cite{WSWSG11}. Namely, let us set
\begin{align}\label{M}
M(t,\psi_0) \vc \sup \{ M(t,\psi_0,\K) \, \vert \, \K \text{ compact subset of } \supp \, \bmu_0 \},
\end{align}
where
\begin{align}\label{Mk}
M(t,\psi_0,\K) &\vc \max \left\{ 0, \bmu_0(\K) - \bmu_t(J^+(\K)) \right\}.
\end{align}


The number $M(t,\psi_0,\K) \in [0,1]$ can be thought of as 
the `capacity of the superluminal communication channel' -- discussed in Section \ref{subsec:general}. In this context, it is desirable to keep track of the dependence of $M(t,\psi_0,\K)$ on $\K$ to see whether the latter is not unreasonably large (or small) for the information transfer to be possible -- even in principle.

Note, that the difference $\bmu_0(\K) - \bmu_t(J^+(\K))$ cannot, in general, be understood as the `outside probability' \cite{WSWSG11}, i.e. the pure `leak-out' of the probability. The latter holds only if $\K = \supp \, \bmu_0$ is compact. In general, $J^+(\K)$ depends causally on the region $J^-(J^+(\K)) \supseteq \K$, so the flow of probability into $J^+(\K)$ from outside of $\K$ can diminish, or even completely compensate, the visible acausal effect. In fact, the superluminal flow can conspire in such a way that it might be hard in practice to find a compact region $\K \subseteq \{0\} \x \R^n$, for which $M(t,\psi_0,\K) > 0$ for given $t$ and $\psi_0$. Nevertheless, it turns out that in the relativistic-Schr\"odinger system the quantity $M(t,\psi_0)$ helps understanding the acausal behaviour and gives somewhat larger values than $N(t,\mu_0)$ in the limit of a perfectly localised initial state.


\subsection{A non-relativistic system}

Let us first consider a non-relativistic quantum system, for which one would expect an acausal behaviour. Indeed, for instance the well-known spreading of the Gaussian wave packet of a free massive quantum particle is acausal in the sense of Definition \ref{def:causal_evolution}. Let us illustrate this fact by considering an initial wave function $\psi(0,x) = (\tfrac{2}{\pi})^{1/4} e^{-x^2}$ evolving on the 2-dimensional Minkowski spacetime with the Hamiltonian $\tfrac{1}{2m} \dt_x^2$. The resulting evolution of probability measures (in natural units) reads
\begin{align*}
d\mu_t(x) = \sqrt{\frac{2}{\pi(1+4 (t/m)^2)}} e^{- \tfrac{2 x^2}{1+4(t/m)^2}}\, dx.
\end{align*}
To show that the evolution $\E: t\mapsto \mu_t$ is acausal we exploit Proposition \ref{charcond}. If we take $K = [-a,a]$ for some $a>0$, then
\begin{align*}
\bmu_t(J^+(\{0\} \times K)) = \mu_t([-a-t,a+t]) = \int_{-a-t}^{a+t} d\mu_t = \mathrm{Erf} \left(\frac{\sqrt{2} m (a+ t)}{\sqrt{m^2+ 4 t^2}}\right),
\end{align*}
where $\mathrm{Erf}$ is the error function. Since the latter increases monotonically, 
we conclude that for $a > \frac{m \left(\sqrt{m^2+4 t^2}+m\right)}{4 t}$ we have $\int_{-a-t}^{a+t} d\mu_t < \int_{-a}^{a} d\mu_0$ for every $t>0$. Hence, for any $t>0$ there exists a compact set $\K = \{0\} \times K \subset \R^2$, such that the inequality $\bmu_t(J^+(\K)) < \bmu_0(\K)$ holds and so $\bmu_0 \npreceq \bmu_t$.

We can now proceed to the study of two specific relativistic quantum systems driven by the Dirac and relativistic-Schr\"odinger Hamiltonians. 

\subsection{\label{subsec:Dirac}The Dirac system}

Let us first turn to the Dirac system, which is generally believed to conform to the principle of causality \cite{Barat2003,Hegerfeldt2001,WSWSG11}. Below, we confirm this statement in the rigorous sense of Definition \ref{def:causal_sys}.

\begin{Prop}\label{Ex1}
Let $\psi \in L^2(\R^{1+n}) \otimes \sC^{2^{\lfloor (n+1)/2\rfloor}}$ be a solution to the $(1+n)$-dimensional Dirac equation \footnote{Our conventions are: $\eta = \diag\{-1,1,\ldots,1\}$, $\gamma^\mu \gamma^\nu + \gamma^\nu \gamma^\mu = -2 \eta^{\mu\nu} I_n$, $\left(\gamma^0\right)^\dag = \gamma^0$ and $\left(\gamma^k\right)^\dag = -\gamma^k$ for $k = 1,\ldots,n$.}
\begin{align}
\label{Dirac1}
i \gamma^\mu \partial_\mu \psi - m \psi = 0
\end{align}
\noindent
and let $\psi^\dagger(t,x) \psi(t,x) \, d^n x$ be the corresponding time-dependent probability density. Then, the Dirac system is causal in the sense of Definition
\ref{def:causal_sys}.
\end{Prop}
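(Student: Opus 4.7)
The plan is to reduce the proposition to Corollary~\ref{contcor} by producing a divergence-free, causal four-current whose time component reproduces the prescribed probability density $\psi^\dagger \psi$. The natural candidate is the Dirac current $j^\mu := \bar\psi\, \gamma^\mu \psi$ with $\bar\psi := \psi^\dagger \gamma^0$. The paper's conventions yield $(\gamma^0)^2 = I$, so $j^0 = \psi^\dagger \psi$, which matches the density in the statement; and $L^2$-normalisation of $\psi$ gives $Q := \int_{\R^n} j^0(t,x)\, d^nx = 1$, so the measure built in Corollary~\ref{contcor} is precisely $\mu_t$. Thus, once I verify $\partial_\mu j^\mu = 0$ and show that $J := (j^0, \mathbf{j})$ is a future-directed causal vector field, the corollary will deliver causal evolution of $\mu_t$ on any interval $[0,T]$, hence causality of the Dirac system in the sense of Definition~\ref{def:causal_sys}.

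The continuity equation $\partial_\mu j^\mu = 0$ is the standard textbook derivation. From \eqref{Dirac1} and its Hermitian adjoint, combined with the identity $(\gamma^\mu)^\dagger = \gamma^0 \gamma^\mu \gamma^0$ that follows from the stated adjointness of the $\gamma^\mu$'s, one obtains the conjugate equation $i(\partial_\mu \bar\psi)\gamma^\mu + m\bar\psi = 0$. Contracting \eqref{Dirac1} on the left with $\bar\psi$, contracting the conjugate equation on the right with $\psi$, and adding, the mass terms cancel and the outcome is $\partial_\mu(\bar\psi \gamma^\mu \psi) = 0$.

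The heart of the argument is the pointwise causality bound $\|\mathbf{j}(t,x)\| \leq j^0(t,x)$. Writing $j^k = \psi^\dagger \alpha^k \psi$ with $\alpha^k := \gamma^0 \gamma^k$, the Clifford relations $\{\gamma^\mu,\gamma^\nu\} = -2\eta^{\mu\nu}I$ together with the adjointness conventions imply $(\alpha^k)^\dagger = \alpha^k$ and $\{\alpha^k,\alpha^l\} = 2\delta^{kl}I$. Hence, for any unit vector $u \in \R^n$, the self-adjoint matrix $A_u := \sum_k u_k \alpha^k$ squares to $I$, so its spectrum is contained in $\{-1,+1\}$ and $\|A_u \chi\| \leq \|\chi\|$ pointwise for every spinor $\chi$. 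By Cauchy--Schwarz,
\begin{align*}
|u \cdot \mathbf{j}(t,x)| \;=\; |\psi^\dagger(t,x)\, A_u\, \psi(t,x)| \;\leq\; \|\psi(t,x)\|^2 \;=\; j^0(t,x),
\end{align*}
and taking the supremum over unit $u$ yields $\|\mathbf{j}\| \leq j^0$, i.e., $J^\mu$ is future-directed and causal. Combined with the continuity equation, Corollary~\ref{contcor} then gives causality of $\E: t\mapsto \mu_t$.

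I do not foresee a serious obstacle: the whole causality bound rests on the single algebraic identity $A_u^2 = I$, an immediate consequence of the Clifford relations, and the continuity equation is textbook material. The only mild point is regularity: Corollary~\ref{contcor} is phrased for sufficiently smooth $\rho$ and $\mathbf{j}$, so for a general $\psi \in L^2 \otimes \CC^{2^{\lfloor (n+1)/2 \rfloor}}$ one should either specialise to Schwartz-class initial data (under which the Dirac flow preserves smoothness) or reinterpret \eqref{conteq} distributionally and conclude by a standard density argument.
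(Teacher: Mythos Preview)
Your proposal is correct and follows the same overall strategy as the paper: reduce to Corollary~\ref{contcor} by exhibiting the Dirac current $J^\mu = \bar\psi\gamma^\mu\psi$ as a divergence-free causal vector field with $J^0 = \psi^\dagger\psi$.

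The one genuine difference lies in how the pointwise causality of $J^\mu$ is established. The paper argues by Lorentz covariance: if $J(p)$ were spacelike, one could boost to a frame in which $J'^0(p)=\psi'^\dagger(p)\psi'(p)=0$, forcing $\psi'(p)=0$ and hence $\psi(p)=0$ (since the spinor transformation is invertible), contradicting $J(p)\neq 0$. You instead give a direct algebraic bound: for any unit $u\in\R^n$ the matrix $A_u=\sum_k u_k\gamma^0\gamma^k$ is self-adjoint with $A_u^2=I$, so Cauchy--Schwarz gives $|u\cdot\mathbf{j}|\le j^0$ and hence $\|\mathbf{j}\|\le j^0$. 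Your route is more elementary and self-contained, requiring only the Clifford relations and not the spinor representation of the Lorentz group; the paper's route is more geometric and makes the Lorentz-vector nature of $J^\mu$ do the work. Both are standard, and your remark on regularity is a fair caveat that the paper leaves implicit.
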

\begin{proof}
The proof is a straightforward application of Corollary \ref{contcor}. The associated continuity equation is satisfied with $\rho := \psi^\dag \psi$ and $\textbf{j} := \left( \psi^\dag \gamma^0 \gamma^k \psi
\right)_{k=1, \ldots, n}$. In this case, $\rho$ is a probability density function (and so $Q = 1$) and the quantity $J := ( \rho, \textbf{j})$ can be simply written as
\begin{align}\label{Dirac_current}
J^\mu :=  \psi^\dag \gamma^0 \gamma^\mu \psi.
\end{align}
$J$ is well-known to enjoy the transformation properties of a vector field on the $(1+n)$-dimensional Minkowski spacetime.

Moreover, this vector field is causal everywhere. Indeed, assume that $J$ is spacelike at some event $p$. Then, we can find an inertial frame in which $J^{\prime0}(p) = 0$, that is $\psi^{\prime\dag}(p) \psi^\prime(p) = 0$
and therefore $\psi^\prime(p) = 0$. But this would mean that also $\psi(p) = 0$, because $\psi(p)$ and $\psi^\prime(p)$ are related through a unitary transformation. On the other hand, $\psi(p) = 0$ would imply $J(p)=0$ -- a contradiction with the assumption that $J$ was spacelike at $p$.
\end{proof}



Let us emphasise the fact that in the Dirac system causality is satisfied during the evolution of \emph{any} initial spinor. In particular, we impose no restrictions on its energy or localisation. This fact does not contradict Hegerfeldt's results (see \cite{Hegerfeldt2001}), as it is well known \cite{Thaller,Barat2003} that positive-energy Dirac wave packets cannot have the localisation properties required by Hegerfeldt's theorem \cite{Hegerfeldt1985}.

We conclude this section with an extension of Proposition \ref{Ex1} to interacting Dirac systems.

\begin{Rem}
The proof of causality of the Dirac system relies on the basic continuity equation
\begin{align}\label{Dirac_cont}
\dt_{\mu} J^{\mu} = 0
\end{align}
enjoyed by the probability current $J^{\mu}$. The latter, as a fundamental law of probability conservation, which holds also in presence of an external electromagnetic or Yang--Mills potentials. In the latter case, the wave function $\psi$ acquires additional degrees of freedom. In general, the Dirac system with \emph{any} interaction which does not spoil the continuity equation \eqref{Dirac_cont} is causal in the sense of Definition \ref{def:causal_sys}.
\end{Rem}

\subsection{\label{subsec:rS}The relativistic-Schr\"odinger system}

We now turn to the relativistic-Schr\"odinger system, i.e. we consider wave packets evolving under the Hamiltonian $\hat{H} = \sqrt{\hat{p}^2 + m^2}$, with $\hat{p} = - i \dt_x$ and $m \geq 0$. For the sake of simplicity, we restrict ourselves to the case of spin 0 representation and one spacial dimension.

Since in the relativistic-Schr\"odinger system $\hat{H} \geq 0$, Hegerfeldt's theorem applies and we expect the evolution of a localised initial state to be acausal. This has been checked (and quantified) in \cite{WSWSG11} for a family of compactly supported initial wave packets $\psi_0(x) = \frac{1}{\sqrt{2d}} \chi_{[-d,d]}(x)$, with $\chi$ being the characteristic function. Because of Proposition \ref{neccond}, this result implies that the evolution of measures in this case is acausal. We consequently conclude that the relativistic-Schr\"odinger system is not causal and thus does not meet Principle~\ref{post}. However, compactly supported states are unphysical idealisations (cf. for instance the Reeh--Schlieder theorem \cite{ReehSchlieder}). Moreover, in the relativistic-Schr\"odinger system the property of compact spacial support is lost whenever the wave packet is boosted to any other frame \cite{WSWSG11}. It is therefore instructive to study the evolution of other classes of initial wave packets to gain better understanding of the nature of causality violation in this system.

Given any initial state $\psi_0 \in L^2(\R)$, the evolution under $\hat{H}$ yields for any $t\geq 0$,
\begin{align}\label{evo}
\psi(t,x) = \frac{1}{\sqrt{2\pi}} \int_{-\infty}^{\infty} \, \widehat{\psi}_0(p) \exp\left( - i \sqrt{p^2 + m^2} t + i p x \right) \, dp\, ,
\end{align}
where $\widehat{\psi}_0$ is the Fourier transform of $\psi_0$.

To check whether the evolution of measures $\E: t \mapsto \mu_t$ with $d\mu_t = \abs{\psi(t,x)}^2 dx$ breaks causality in the sense of Definition \ref{def:causal_evolution} we exploit Proposition \ref{charcond}, similarly as we did for the non-relativistic Hamiltonian. In the relativistic case, explicit formula for the Fourier integral \eqref{evo} is not available, therefore we had to resort to numerical integration. The complete analysis performed with the help of Wolfram Mathematica 10.0.4 is available online \cite{EM2016_Math}, below we summarise its essential points.

The analysis presented below concerns the behaviour of the quantity $M(t,\psi_0,\K_a)$ for $\K_a = \{0\} \x [-a,a]$, with $a>0$ and initial wave packets with zero average momentum. This simplifies the analysis and is sufficient to understand qualitatively the causality violation effects. On the other hand, the quantitative picture is limited by the choice of working with symmetric intervals only. In particular, we obviously have
\begin{align}\label{Mn}
\wt{M}(t,\psi_0) \vc \sup_{a \in \R} M(t,\psi_0,\K_a) \leq M(t,\psi_0).
\end{align}
Note also that the supremum in $M(t,\psi_0)$ can involve disconnected subsets of $\supp \mu_0$. Nevertheless, 
the estimate $\wt{M}(t,\psi_0)$, being only a lower bound of $M(t,\psi_0)$, already gives significantly larger values than $N(t,\mu_0)$ of \cite{WSWSG11} in the limit of a perfectly localised initial state.

In \cite{EM2016_Math} we analysed the impact of a non-zero average momentum of the wave packet $\psi_0$ on $M(t,\psi_0,\K_a)$ and found that it does not change the qualitative picture presented below. Note also that a state with a non-zero average momentum can always be boosted to a frame where $\<\hat{p}\> = 0$, what, in view of the discussion following Definition \ref{def:causal_evolution}, will not change the conclusions about the (a)causal behaviour, though it will affect the quantitative picture. In \cite{EM2016_Math} we have also studied the asymmetric case -- with $\K = \{0\} \x [a,b]$. It turns out, not surprisingly, that for symmetric initial wave functions with vanishing average momentum the maximum of $M(t,\psi_0,\{0\} \x [a,b])$ is actually attained for some symmetric interval $[-a,a]$. This is no longer true if the initial wave packet has a nonvanishing expectation value of $\hat{p}$. In the case of $\<\hat{p}\> > 0$, the maximal causality violation is observed by picking the interval $[a,b]$ with $a<0<b$ and $\abs{b} < \abs{a}$. This confirms the supposition that causality breakdown is best visible when the spreading effects are more important than the average motion of the packet.

We shall first focus on the massive case $m>0$ and then briefly comment on the massless one. If $m >0$, we can set $m=1$ without loss of generality. Indeed, note that \eqref{evo} implies
\begin{align*}
\psi(t,x,\psi_0;m) = \psi(mt,mx,\psi_0(\cdot /m);1),
\end{align*}
hence
\begin{align}\label{Mm}
M(t,\psi_0,\{0\} \x [a,b];m) & = M(mt,\psi_0(\cdot /m),\{0\} \x [a/m,b/m];1), \notag \\
M(t,\psi_0;m) & = M(mt,\psi_0(\cdot /m);1).
\end{align}


The first class of initial states in the relativistic-Schr\"odinger system that we have analysed in detail are the Gaussian wave packets
\begin{align}\label{gaussian}
\psi^G_0(x;d) = (\pi d)^{-1/4} \exp\left( \tfrac{-x^2}{2d} \right),
\end{align}
with the width $\sqrt{d} > 0$. 

Figure \ref{fig:Gauss} illustrates the behaviour of the quantity $M(t,\psi_0,\K_a)$, with $\K_a = \{0\} \x [-a,a]$ and $d = 1$.

\begin{figure}[h]
\begin{center}
\resizebox{!}{290pt}{\includegraphics[scale=0.6]{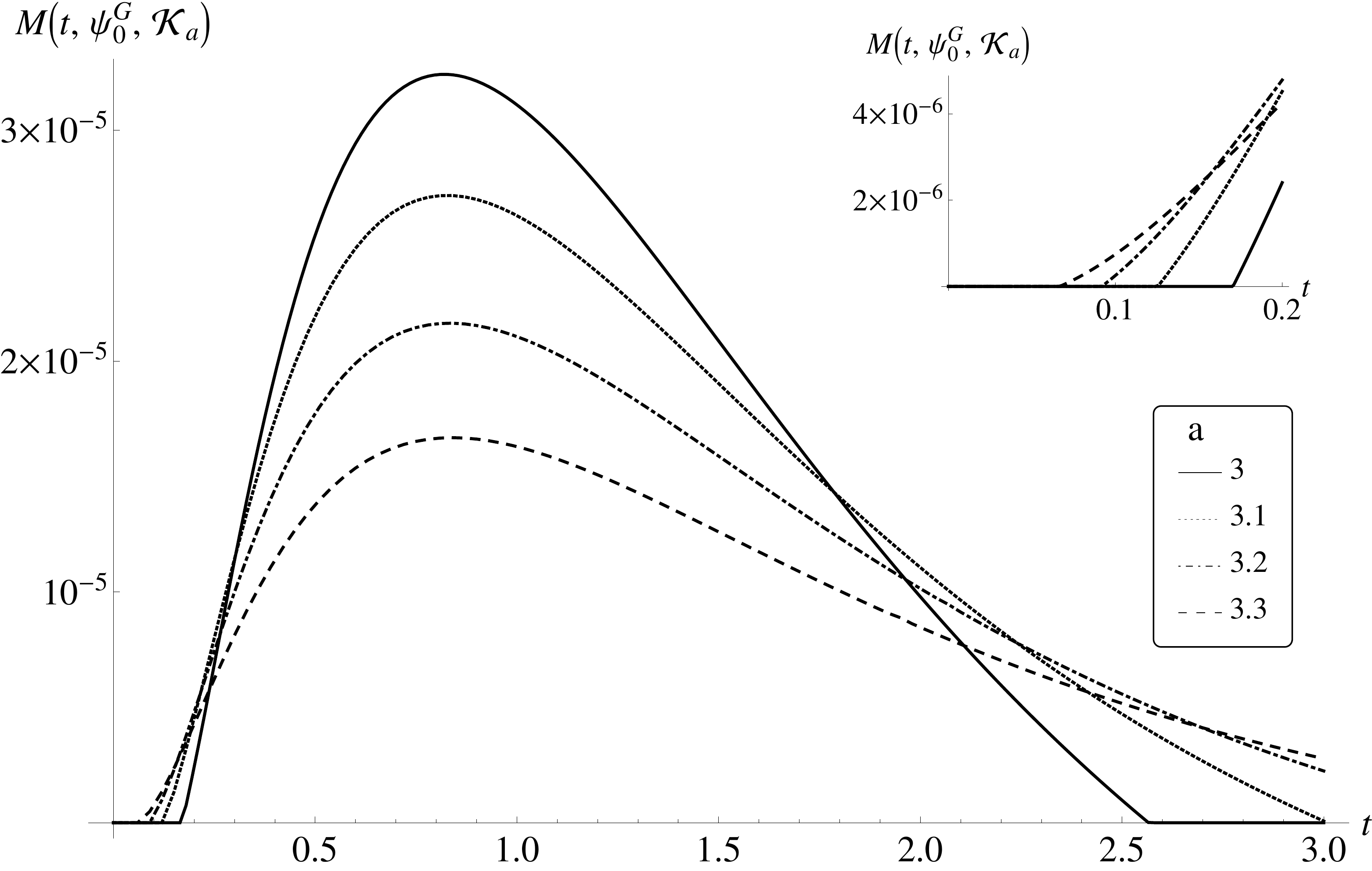}}
\caption{\label{fig:Gauss}Acausal evolution of a Gaussian probability density in the relativistic-Schr\"odinger system.}
\end{center}
\end{figure}

\begin{figure}[h]
\begin{center}
\resizebox{!}{250pt}{\includegraphics[scale=0.6]{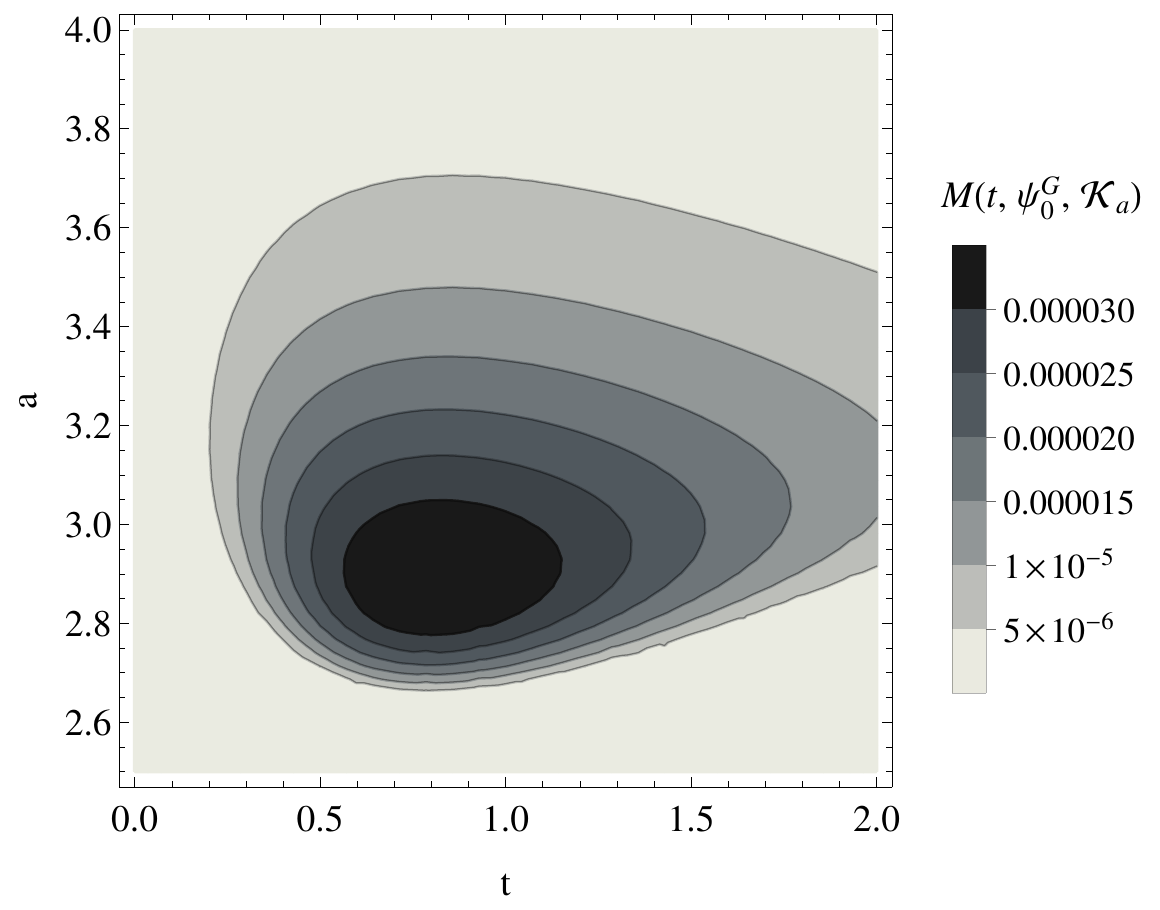}}
\caption{\label{fig:Gauss2} Estimation of the values of the parameters $t_1$ and $a_M$ in the massive ($m=1$) case.}
\end{center}
\end{figure}

At first, the quantity $M(t,\psi_0,\K_a)$ is zero suggesting a causal evolution. Then, for some $t=t_0$, it starts increasing, manifesting the breakdown of causality. For later times ($t > t_1$), the probability flow `slows down' and the quantity $M(t,\psi_0,\K_a)$ can even decrease to 0 for $t > t_2$ and a suitably chosen compact set $\K$.

In \cite{EM2016_Math} we studied the dependence of the values of time instants $t_0, t_1$ and $t_2$ on the choice of the `size' of the compact set $\K_a = \{0\} \x [-a,a]$, as parametrised by $a$.  It leads to the following conclusions:
\begin{itemize}
\item For $a$ small enough, the quantity $M(t,\psi_0,\K_a)$ is zero for all times and the breakdown of causality is not visible. On the contrary, the values of $a$ larger than $a_0 \approx 2.65$ lead to the acausal behaviour as illustrated in Figure \ref{fig:Gauss}.

\item The first time-scale $t_0$ decreases with larger values of $a$. It suggests that, as in the non-relativistic case, for any $t>0$ there exists a compact set $\K \subset \{0\} \times \R$, such that the inequality $\bmu_t(J^+(\K)) < \bmu_0(\K)$ holds and thus causality is actually broken immediately once the evolution starts.

\item On the other hand, the scale of causality breakdown, quantified by \eqref{Mn}, becomes smaller for larger regions $\K_a$. It attains a maximum $\wt{M}(t,\psi_0^G) = 3.55 \times 10^{-5}$ for $t_1 = 0.81$ and $a_M = 2.89$ -- see Figure \ref{fig:Gauss2}.

\item The causality breakdown has a transient character quantified by the time-scale $t_1 (\K_a) = \argmax_{t \geq 0} M(t,\psi_0,\K_a)$. The quantity $t_1 (\K_a) \approx 0.8$ does not depend significantly on the choice of $a$, provided $a > a_0$.

\item The third time-scale $t_2$, capturing the restoration of causality, can be made arbitrarily large by choosing $a$ large enough.


\end{itemize}

With the narrowing of the initial Gaussian width $d$, the quantity $\wt{M}(t,\psi_0^G;d)$ grows, whereas the time-scale $t_1$ decreases slightly,  as illustrated by the following table:
\begin{gather*}
\begin{array}{|c|c|c|c|c|c|c|}
\hline
d & 1 & 10^{-1} & 10^{-2} & 10^{-3} & 10^{-4} & 10^{-5} \\
\hline
\wt{M}(t_1) & 0.000035 & 0.0066 & 0.039 & 0.079 & 0.106 & 0.121 \\
\hline
t_1 & 0.81 & 0.68 & 0.64 & 0.58 & 0.53 & 0.48 \\
\hline
a_M & 2.89 & 0.63 & 0.165 & 0.048 & 0.015 & 0.0047 \\
\hline
\end{array}
\end{gather*}

In the limit $d \to 0$, the quantity $\wt{M}(t,\psi_0^G)$ tends to the maximum of approx. 0.13. This value is by 60\% larger than the maximal `outside probability' computed in \cite{WSWSG11}. 
It shows, that to quantify the amount of the causality breakdown for arbitrary wave packets it is not sufficient to look at one specific region of space from which the probability `leaks too fast'.

In the massless case, the causality breakdown in the quantum system driven by the Hamiltonian $\hat{H} = \sqrt{\hat{p}^2}$ has a persistent rather than transient character: The quantity $\wt{M}(t,\psi_0^G)$ is greater than 0 for any $t>0$ and increases monotonically -- see Figure \ref{fig:Gauss0}. It approaches asymptotically the value 0.13, in consistency with the above results and formula \eqref{Mm}.

\begin{figure}[h]
\begin{center}
\resizebox{!}{250pt}{\includegraphics[scale=0.6]{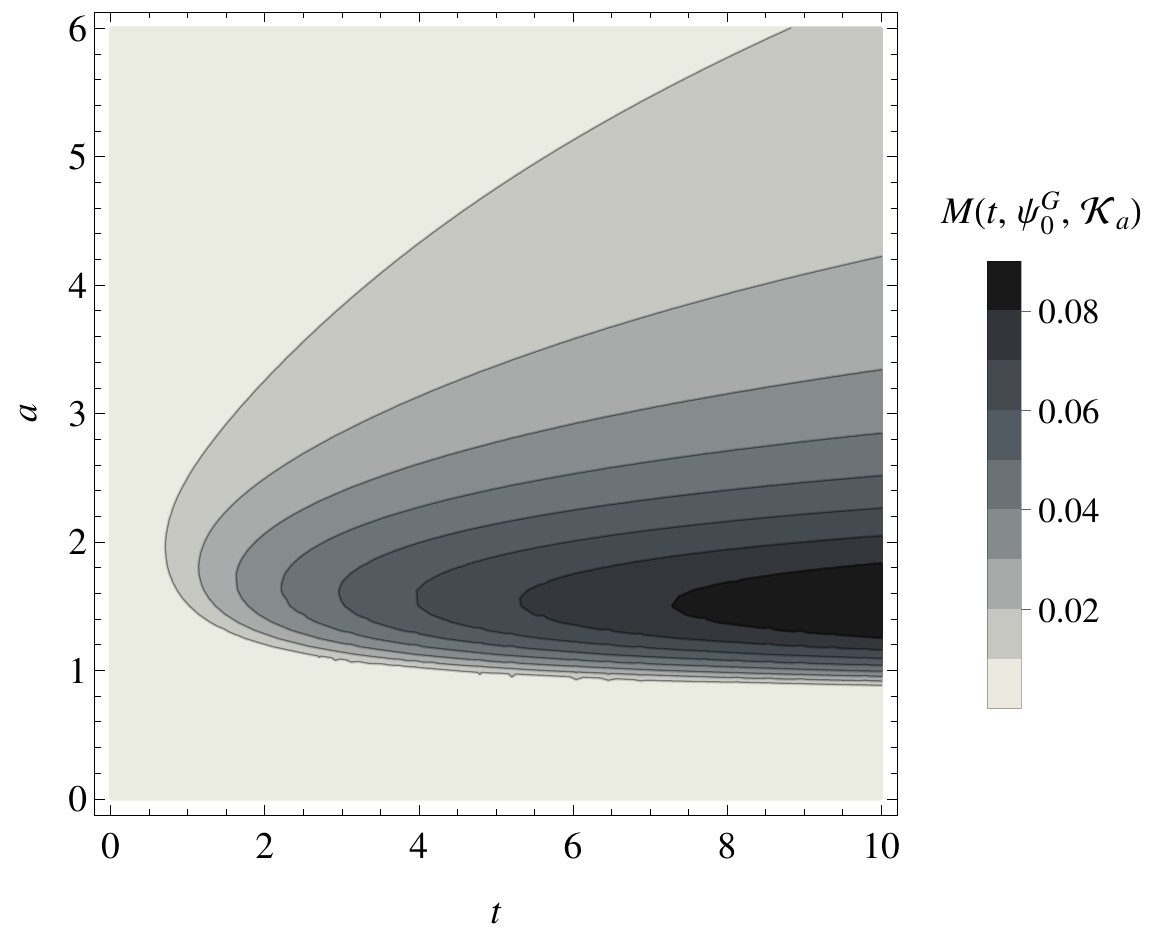}}
\caption{\label{fig:Gauss0} The persistent character of causality violation in the massless case.}
\end{center}
\end{figure}


Let us now return to the massive case $m=1$ and analyse the second class of initial states with exponentially bounded tails,
\begin{align}\label{exp}
\psi_0^e(x) = \sqrt{\tfrac{\alpha}{2}} \sech( \alpha x),
\end{align}
for $\alpha > 0$. 
Thanks to the fact that $\sech$ is its own Fourier transform, the states \eqref{exp} have exponential tails also in the momentum representation, which makes them suitable for numerical integration.

According to Hegerfeldt's result, one expects an acausal evolution for $\alpha > m =1$. The following table illustrates the amount of causality violation quantified by formula \eqref{Mn} as $\alpha$ approaches the Hegerfeldt's bound.

\begin{gather*}
\begin{array}{|c|c|c|c|c|c|}
\hline
\alpha & 3 & 2 & 5/3 & 3/2 \\
\hline
\wt{M}(t_1) & 3 \times 10^{-4} & 2 \times 10^{-6} & 1.4 \times 10^{-8} & 10^{-10}\\
\hline
t_1 & 0.79 & 0.83 & 0.84 & 0.85\\
\hline
a_M & 1.4 & 3.2 & 5.2 & 7.4 \\
\hline
\end{array}
\end{gather*}

As $\alpha$ tends to $\infty$ one obtains a maximal amount of causality violation around 13\%. This is consistent with the result we obtained above for the $\delta$-like limit of the initial Gaussian states.

On the other hand, the amount of causality violation decreases fast as $\alpha$ approaches $m=1$. It suggests that the evolution of measures triggered by the initial state \eqref{exp} with $\alpha = m = 1$ is causal. Indeed, in \cite{EM2016_Math} we found no evidence of causality violation during the evolution of such an initial wave packet.

This observation is, however, only an artefact of the chosen class of states. The next example shows that the Hegerfeldt's bound is in fact artificial.


We now investigate the evolution of initial states
\begin{align}\label{sinexp}
\psi_0^{SE}(x) = \mathcal{N} \, \frac{\sin x}{x} \sech( \alpha x),
\end{align}
for $\alpha >0$, with the normalisation constant $\mathcal{N}$. States in this class still have exponentially bounded tails both in position and momentum representation.

By computing the quantity $\wt{M}(t,\psi_0^{SE})$ we found in \cite{EM2016_Math} a clear evidence of causality violation for all values of $\alpha \in [0,4]$, as shown on Figure \ref{fig:exp}.

\begin{figure}[h]
\begin{center}
\resizebox{!}{280pt}{\includegraphics[scale=0.6]{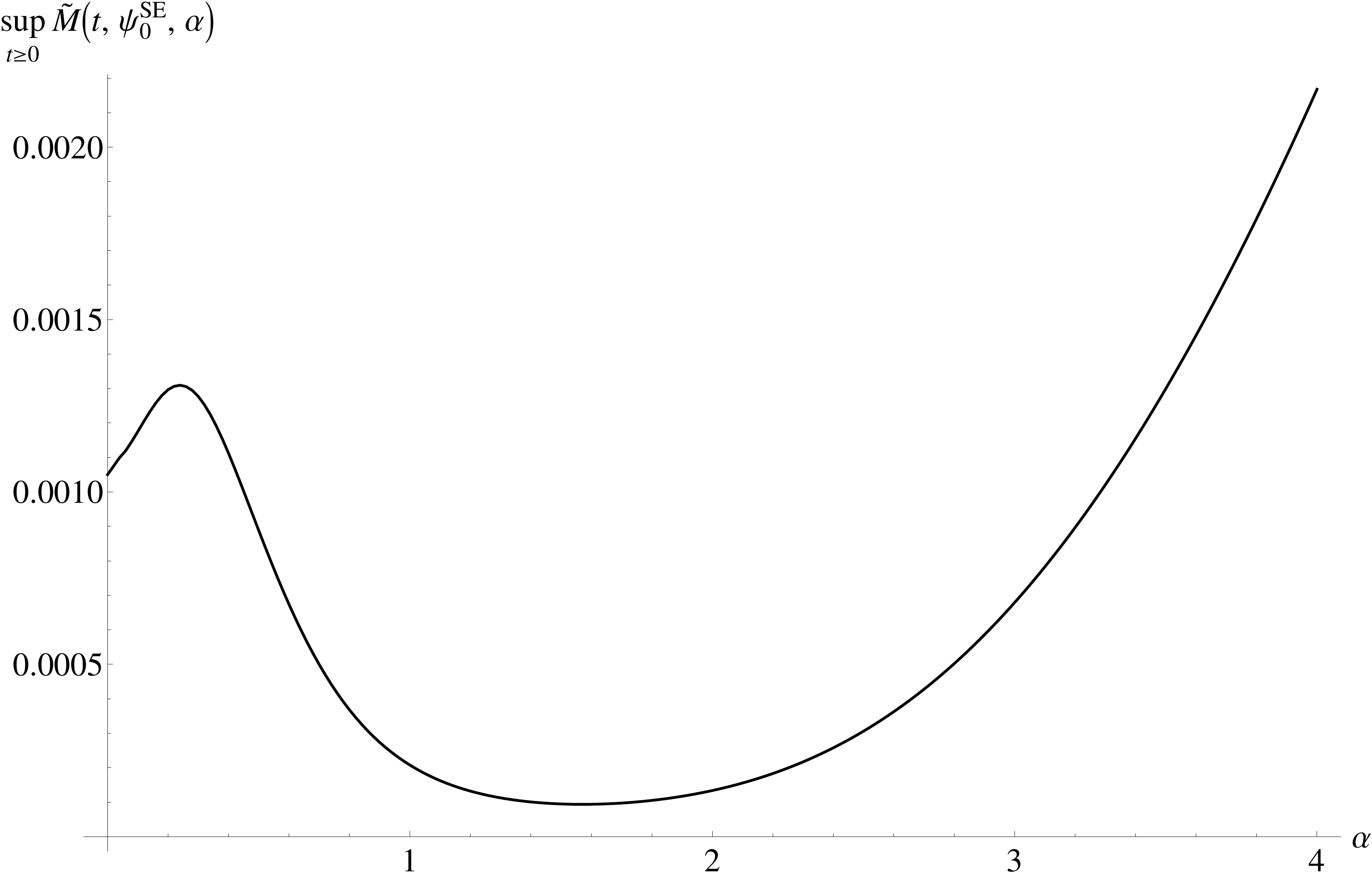}}
\caption{\label{fig:exp}The maximal amount of causality violation during the evolution of initial states in the class \eqref{sinexp}.}
\end{center}
\end{figure}

We see that initial states decaying as $e^{-m \norm{x}}$ play no special role in the causality violation effects in the relativistic-Schr\"odinger system. Although there seems to be local minimum for $\alpha \approx 1.5$, which may well be an artefact of the fact that $\wt{M}(t,\psi_0^{SE})$ is only a lower bound of $M(t,\psi_0^{SE})$, the quantity $\wt{M}$ is manifestly positive for all $\alpha$.


In particular, note that causality of the evolution is spoiled for initial states in the class \eqref{sinexp} with $\alpha = 0$, which decay only as $\mathcal{O}(x^{-1})$. In fact, our numerical analysis suggests that the breakdown of causality is generic also in a wider class of states with heavy tails:
\begin{align*}
\psi_0^{S}(x) = \mathcal{N} \, \left(\frac{\sin p_m x}{p_m x}\right)^n,
\end{align*}
with $n \in \mathbb{N}$ and $p_m >0$. In \cite{EM2016_Math} we checked it explicitly for $n \in \{1,2,3\}$ and $\tfrac{1}{10} \leq p_m \leq 10$.


Let us now now summarise our analysis, draw conclusions and compare them with the controversial upshot of Hegerfeldt.

\section{\label{sec:outlook}Discussion}

\subsection{\label{subsec:Hegerfeldt}The claim of Hegerfeldt}

To facilitate the comparison let us first briefly summarise Hegerfeldt's results on causality presented in \cite{Hegerfeldt1985} and his other works  \cite{Hegerfeldt1,HegerfeldtFermi,Hegerfeldt1998_ann,Hegerfeldt2001,Hegerfeldt2}. We find it important to clarify the field, as the outcomes of \cite{Hegerfeldt1985} are sometimes misinterpreted or overinterpreted (see below). 

Hegerfeldt's conclusion concerning the acausal behaviour of the wave packets relies on three assumptions \cite{Hegerfeldt1985}:

\begin{enumerate}

\item For any region of space $V \subseteq \R^3$, there exists a positive operator $N(V) \leq 1$, such that $\< \psi | N(V) | \psi \>$ yields the probability of finding in $V$ a particle in the state $\psi$.

\item The evolution of the system is driven by a positive Hamiltonian operator $\hat{H}$.

\item There exists a state $\psi_0$ with exponentially bounded tails, i.e.
\begin{align}\label{exp_bound}
\< \psi_0 | N(\mathbb{R}^3 \setminus B_r) | \psi_0 \> \leq K_1 \exp(-K_2 r^k), && \text{ for sufficiently large $r$},
\end{align}
where $B_r$ is a closed ball of radius $r$ centred at the origin.

\end{enumerate}

The constants $K_1, K_2$ depend on $\psi_0$ and the exponent $k$ depends on the chosen Hamiltonian. More concretely, one has \cite{Hegerfeldt1985} $k=1$, $K_2 > m$ for the free relativistic-Schr\"odinger Hamiltonian $\hat{H} = \sqrt{\hat{p}^2 + m^2}$ and $k=2$, $K_2$ arbitrarily small for more general systems with interactions.

Under the above assumptions, Hegerfeldt obtained the following result:

\begin{Thm}[Hegerfeldt Theorem \cite{Hegerfeldt1985}]\label{thm:Heger}
In the quantum system fulfilling the assumptions (1) and (2) let $\psi_0$ be a state satisfying \eqref{exp_bound}. Then,
\begin{align}\label{Heger}
\forall \, t > 0 \ \ \exists \, \ba \in \mathbb{R}^3 \ \ \exists \, r > 0 \ \quad \left\langle \psi_t | N(B_{\ba,r}) | \psi_t \right\rangle > \left\langle \psi_0 | N(\mathbb{R}^3 \setminus B_{\| \ba \| -r-t}) | \psi_0 \right\rangle,
\end{align}
where $B_{\ba,r}$ denotes a closed ball of radius $r$ centred at $\ba$.
\end{Thm}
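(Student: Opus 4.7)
The natural strategy is proof by contradiction, combining a propagation-of-localization step with an analytic-continuation argument driven by positivity of $\hat{H}$. Negating the conclusion, suppose there exists $t > 0$ such that the inequality
\begin{equation*}
\langle \psi_t | N(B_{\mathbf{a}, r}) | \psi_t \rangle \leq \langle \psi_0 | N(\mathbb{R}^3 \setminus B_{\|\mathbf{a}\| - r - t}) | \psi_0 \rangle
\end{equation*}
holds for every $\mathbf{a} \in \mathbb{R}^3$ and every $r > 0$; the goal is to derive a contradiction with $\psi_0$ being a normalized state satisfying \eqref{exp_bound}.

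First, I would propagate the tail bound from $\psi_0$ to $\psi_t$. Covering $\mathbb{R}^3 \setminus B_R$ by countably many balls $B_{\mathbf{a}_i, r_i}$ such that $\|\mathbf{a}_i\| - r_i - t \geq R - C$ (where $C$ is a geometric constant coming from the covering and the fixed $t$), and using the sub-additivity of $V \mapsto \langle \psi_t | N(V) | \psi_t \rangle$ implicit in assumption (1), the contradiction hypothesis combined termwise with \eqref{exp_bound} yields an estimate of the form
\begin{equation*}
\langle \psi_t | N(\mathbb{R}^3 \setminus B_R) | \psi_t \rangle \leq K_1' \exp(-K_2 R^k)
\end{equation*}
for $R$ large, with the same exponent $k$ and decay rate $K_2$ as in \eqref{exp_bound}. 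Iterating with $\psi_t$ in place of $\psi_0$ extends this bound to $\psi_{nt}$ for every $n \in \mathbb{N}$, and continuity of the unitary flow then delivers it uniformly on every compact time interval.

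Second, I would invoke the positivity assumption (2). Since $\mathrm{spec}(\hat{H}) \subseteq [0,\infty)$, the $\mathcal{H}$-valued map $z \mapsto e^{-i \hat{H} z}\psi_0$ extends from the real line to a norm-bounded holomorphic function on the closed lower half-plane $\overline{\mathbb{C}^-}$. For any positive bounded operator $A$ with $\|A\| \leq 1$, the composition $F_A(z) \vc A^{1/2} e^{-i \hat{H} z}\psi_0$ is thus a bounded holomorphic $\mathcal{H}$-valued function on $\mathbb{C}^-$ whose real-axis boundary values satisfy $\|F_A(t)\|^2 = \langle \psi_t | A | \psi_t\rangle$. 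Specialising to $A = N(\mathbb{R}^3 \setminus B_R)$ and combining with the first step, one obtains $\|F_A(t)\|^2 \leq K_1' \exp(-K_2 R^k)$ uniformly on the real axis, while $\|F_A(z)\| \leq 1$ holds throughout $\overline{\mathbb{C}^-}$.

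Finally -- and this is the main obstacle -- I would close the argument with a Phragm\'en--Lindel\"of-type estimate in the lower half-plane, combining the uniform exponential smallness on the boundary with the interior $L^\infty$ bound, to conclude that as $R \to \infty$ the functions $F_A$ vanish at a rate forcing $\psi_0$ to be strictly localized inside balls of arbitrarily small radius -- impossible for a unit vector. The calibration here is delicate: the exponent $k$ and rate $K_2$ in \eqref{exp_bound} must match the allowed growth of $e^{-i \hat{H} z}$ off the real line. For the relativistic-Schr\"odinger Hamiltonian ($k = 1$), the positive-energy analytic continuation admits exponential growth of rate at most $m$, which is precisely why $K_2 > m$ must be assumed; for systems with more general interactions ($k = 2$), Gaussian decay dominates any half-plane growth and any $K_2 > 0$ suffices. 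Getting this Phragm\'en--Lindel\"of calibration exactly right, together with a careful treatment of the iteration in the first step, is the technical heart of Hegerfeldt's original argument.
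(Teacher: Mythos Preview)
The paper does not supply a proof of Theorem~\ref{thm:Heger}. It is stated as a cited result from \cite{Hegerfeldt1985}; immediately after the statement the authors remark that ``it is this condition which is actually proven in \cite{Hegerfeldt1985}'' and then move on to interpretation and consequences. There is therefore no proof in the paper to compare your proposal against.

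For what it is worth, your sketch does track the architecture of Hegerfeldt's original argument --- analytic continuation to a half-plane via positivity of $\hat H$, followed by a Phragm\'en--Lindel\"of-type estimate calibrated to the decay exponent in \eqref{exp_bound}. There is, however, a gap in your first step. You negate the conclusion at a \emph{single} time $t$ and then write ``Iterating with $\psi_t$ in place of $\psi_0$ extends this bound to $\psi_{nt}$ for every $n$.'' This does not follow: the contradiction hypothesis compares $\psi_t$ with $\psi_0$, not $\psi_{2t}$ with $\psi_t$; nothing in the negated statement asserts that the same propagation inequality holds when the initial state is replaced by $\psi_t$. Without that, you cannot bootstrap the tail bound to all real times, which is what the boundary estimate in your Phragm\'en--Lindel\"of step requires. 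Hegerfeldt's own route sidesteps this by fixing $\ba,r$ and analysing the half-plane-analytic vector-valued function $z \mapsto N(B_{\ba,r})^{1/2} e^{-i\hat H z}\psi_0$ directly, extracting the contradiction from its behaviour as $\|\ba\|\to\infty$ rather than from an iterated time-step bound.
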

Let us stress that, although condition \eqref{Heger} is never mentioned explicitly in Hegerfeldt's works, it is this condition which is actually proven in \cite{Hegerfeldt1985}.

In the original formulation, Hegerfeldt demonstrated the above result under the assumption of arbitrary finite propagation speed $c'$. However, since the strict inequality \eqref{Heger} holds for any $t>0$, we can set $c' = 1$ without loss of generality.

Since it is obviously true that $\mathbb{R}^3 \setminus B_{\|\ba\|-r-t} \supseteq B_{\ba,r+t}$, therefore \eqref{Heger} implies that
\begin{align}\label{Heger2}
\forall \, t > 0 \ \ \exists \, \ba \in \mathbb{R}^3 \ \ \exists \, r > 0 \ \quad \left\langle \psi_t | N(B_{\ba,r}) | \psi_t \right\rangle > \left\langle \psi_0 | N(B_{\ba,r+t}) | \psi_0 \right\rangle.
\end{align}
This result, albeit somewhat weaker than \eqref{Heger}, has a clearer interpretation. Namely, it shows that for any $t > 0$ \emph{there exists a ball} in $\mathbb{R}^3$, into which the probability `has been leaking too fast' by the time $t$ has elapsed.

We emphasise the ``there exists a ball'' phrase in the above results. This makes them considerably weaker statements than the one alleged by Hegerfeldt in \cite{Hegerfeldt1998_ann}, where the author announces the superluminal flow of probability from \emph{any} ball centred at the origin. The latter claim is in fact false in the relativistic-Schr\"odinger system, as we have seen in the previous Section. Additionally, notice that \eqref{Heger2} speaks about the inflow of probability into a ball rather than the outflow.

\subsection{Summary of the obtained results}

In our study of causality in quantum mechanics we have followed a different path than Hegerfeldt, although the underlying concept is quite similar. Our Definition \ref{def:causal_evolution} agrees with the viewpoint on causality in quantum mechanics, 
 shared in particular by Hegerfeldt, in that it should be about the \emph{flow of probability}. 
We claim that the property of being causal or not should refer to the physical system (or, more precisely, to the theory modelling the system at hand) and not to some particular class of its states. One of the motivations behind such a view is the fact that whereas the spatial properties of wave packets in relativistic quantum systems depend on the chosen frame \cite{Haag}, the causality of evolution of measures is an observer-invariant concept (cf. \cite[Section 4]{Miller16}).

Our study of the relativistic-Schr\"odinger system supports the above claim. We have shown that the superluminal flow of the probability density is not related to the decay-in-space properties of the initial wave packet. In particular, Hegerfeldt's assumption \eqref{Heger} seems to be merely an artefact of his technique of proving Theorem \ref{thm:Heger}. This feature constitutes the first difference between Hegerfeldt's approach and ours: we do not make any assumptions about the form of the wave packets.

The second advantage of our formalism consists in the fact that we do not need to assume the positivity of energy. The latter assumption does play an important role in the wave packet formalism, as, for instance, positive-energy solutions of the Dirac equation cannot satisfy Hegerfeldt's bound \eqref{Heger} \cite{Barat2003,Thaller}. However, it does not seem to influence the (a)causality of the probability flow. Our result (Proposition \ref{Ex1}) 
shows that bizarre phenomena resulting from the interference of positive and negative frequency parts of the packet \cite{Thaller2dim}, such as Zitterbewegung \cite{ZitterSchroedinger}, do not spoil the causal evolution of probabilities.

The third characteristic of our strategy is that we do not require \emph{a priori} the existence of any position operator, although in Section \ref{sec:QM} we implicitly assumed that the probability measures are calculated from wave functions via the usual (often named `non-relativistic' \cite{Thaller}) position operator $(\hat{x} \psi)(x) = x \psi(x)$. We did so firstly to facilitate the comparison of our results with the conclusions of \cite{WSWSG11} and, secondly, because $\abs{\psi}^2$ is in fact an observable quantity, which can be measured experimentally (see for instance \cite{ZitterNature}). If one chooses to work, for instance, with the Newton--Wigner position operator $\hat{x}_{\mathrm{NW}}$, one can re-express the probability measure obtained with $\hat{x}_{\mathrm{NW}}$ in terms of the standard `modulus square principle' via the Foldy--Wouthuysen transformation \cite{Thaller}. The corresponding transformed wave packets can never have compact spacial supports, but the flow of probability remains causal on the strength of Proposition \ref{Ex1}.

\subsection{Outlook}

The philosophy behind Definition \ref{def:causal_evolution} is to consider probability measures on spacetime, which model the outcomes of some experiment.

In the general framework outlined in Section \ref{sec:omega} we do not have to ask where do the measures actually come from. Principle \ref{post} states, however, that regardless of the procedure that leads to an evolution of measures at hand, the latter needs to be causal in the rigorous sense of Definition \ref{def:causal_sys}. In the context of the wave packet formalism this postulate accords with the viewpoint that wave functions are not physical objects -- they are just a way to compute probabilities \cite{QIandGR}. We claim that if Principle \ref{post} is violated for some system, it means that the model which yields the dynamics of probability measures is inadequate. More precisely, if causality violation effects, as quantified with the help of the tools from Section \ref{subsec:quantify}, are significant within the domain of applicability of the model, then the model has to be discarded. Let us note that a similar principle was applied in \cite{QFT_detection} to demonstrate the advantage of the Unruh--deWitt model of detection in quantum field theory over the popular Glauber scheme.

From the empirical point of view, Principle \ref{post} implies that the superluminal flow of probability cannot be observed in any experiment. In this spirit one could use it to discriminate various hidden variables theories, also the non-local ones \cite{Zeilinger_nonlocal}, as well as theories with correlations stronger than quantum \cite{InformationCausality}. On the other hand, one can look for evidence of (the analogues of) causality violation effects in a suitable quantum simulation \cite{QuantumSimulation} of the relativistic-Sch\"odinger Hamiltonian.  Let us also note that, within our general formalism, one can incorporate into the measures the errors resulting from the measuring apparatus' imperfections, including the time measurement, or dark counts caused by the quantum vacuum excitations.

As stressed in the Introduction, we consider the wave packet formalism as a phenomenological description of quantum systems, which actually require a quantum field theoretic model. In fact, we regard the probability measures on a spacetime $\M$ as mixed states on the \emph{commutative} $C^*$-algebra of observables $C_0(\M)$. They can thus be seen as outcomes of a channel transforming quantum information into the classical one -- an observable, or more generally an instrument \cite{Keyl}. The measures can thus result from multi-particle quantum states, modelling the effective density of the atomic cloud subject to a direct detection, for instance in the Bose--Einstein condensate \cite{Sacha}.

Let us conclude with an outline of the potential extensions and future application of the developed formalism.

Since the framework of \cite{EcksteinMiller2015} is generally covariant, it seems natural to envisage an extension of the outcomes of Section \ref{sec:QM} to curved spacetimes. The wave packet formalism in the external gravitational field (see for instance \cite{MinimalPacket}) is particularly useful in the study of neutrino oscillations \cite{Beuthe,BernardiniFlavour}. Such an extension, which would require a covariant continuity equation for measures is, however, not that straightforward. The stumbling block is Theorem \ref{superposition} in the optimal transport theory, which is formulated only on $\R^n$.

Another desirable application would be to consider signed measures. This would open the door to the study of causality in the Klein--Gordon system, where the density current does not have a definite sign. A more radical extension would consist in extending the causal relation onto the space $\PP(\M,\B(\H))$ of Borel probability measures on spacetime $\M$ with values in a, possibly noncommutative, algebra of observables $\B(\H)$. Definition \ref{causality_def_true} can be easily adapted to this case: condition i) stays unaltered, whereas the second requirement will take the form $\omega(J^+) = \id_\H$. The details of such a construction, in particular an analogue of Theorem \ref{charcond}, require more care an are to be unravelled. In this framework, one could construct POVM's on $\M$ with the spacetime events regarded as possible outcomes of a generalised observable. With a definite causal order on $\PP(\M,\B(\H))$ one might be able to address the pertinent problem \cite{QuantumCausality} of finding a unified framework for the study of quantum correlations between spacelike and timelike separated regions of spacetimes.

\begin{acknowledgments}
We are grateful to Pawe{\l} Horodecki and Marcin P{\l}odzie\'n for numerous enlightening discussions. We also thank Henryk Arod\'z for comments on the manuscript. ME was supported by the Foundation for Polish Science under the programme START 2016. ME acknowledges the support of the Marian Smoluchowski Krak\'ow Research Consortium ``Matter--Energy--Future'' within the programme KNOW.
\end{acknowledgments}

\bibliography{causality_bib}

\end{document}